\def\1{\hbox{1\kern-.35em\hbox{1}}}
\numberwithin{equation}{section}
\newtheorem{theorem}{Theorem}[section]
\newtheorem*{theorem*}{Theorem}
\newtheorem{lemma}{Lemma}[section]
\newtheorem{proposition}{Proposition}[section]
\newtheorem*{proposition*}{Proposition}
\newtheorem{remark}{Remark}[section]
\newcommand{\Z}{{\mathbb Z}}
\newcommand{\R}{{\mathbb R}}
\newcommand{\C}{{\mathbb C}}
\newcommand{\fg}{{\mathfrak g}}
\newcommand{\fp}{{\mathfrak p}}
\newcommand{\U}{{\rm U}}
\newcommand{\cE}{{\mathcal E}}
\newcommand{\cF}{{\mathcal F}}
\newcommand{\cH}{{\mathcal H}}
\newcommand{\cU}{{\mathcal U}}
\newcommand{\cN}{{\mathcal N}}
\newcommand{\cV}{{\mathcal V}}
\begin{document}

\title[Quantum Kepler problems on superspaces]
{Orthosymplectic Lie superalgebras in \\ superspace analogues of quantum Kepler problems}
\author{R. B. Zhang}
\address{School of Mathematics and Statistics, University of Sydney,
Australia} \email{rzhang@maths.usyd.edu.au}

\begin{abstract}
A Schr\"odinger type equation on the superspace $\R^{D|2n}$ is studied,
which involves a potential inversely proportional to the negative of
the $osp(D|2n)$ invariant ``distance" away from the origin. An
$osp(2, D+1|2n)$ dynamical supersymmetry for the system is
explicitly constructed, and the bound states of the system are shown
to form an irreducible highest weight module for this superalgebra.
A thorough understanding of the structure of the irreducible module is obtained.
This in particular enables the determination of the energy
eigenvalues and the corresponding eigenspaces as well as their
respective dimensions.
\thanks{{\bf Key words}: orthosymplectic Lie superalgebras; highest weight representations; Kepler problem.}
\end{abstract}

\maketitle


\section{Introduction}
The quantum Kepler problem and its analogues in higher dimensions
are a series of soluble quantum mechanical systems with $-\frac{1}{r}$
potentials. The simplest of such systems is the hydrogen atom. It
was discovered in the 60s by McIntosh and Cisneros \cite{MC} and Zwanziger \cite{Z}
independently that the quantum system describing the
hydrogen atom remained soluble when coupled to a magnetic charge.
Since then the quantum Kepler problem has been generalised to
include couplings to nonabelian magnetic monopoles in 5-dimensions in \cite {I} and in arbitrary
dimensions in \cite{M1}. Such generalisations are referred to as the
generalised MICZ-Kepler problem in the literature.

It has long been known that the
original quantum Kepler problem had an $so(2, 4)$ dynamical
symmetry. In \cite{BB}, Barut and Bornzin showed that the dynamical
symmetry survived when the system was under the influence of a
magnetic charge. They used the dynamical symmetry to give a beautiful solution
of the problem. In a recent joint publication \cite{MZ} with Meng,
we demonstrated that the
generalised MICZ-Kepler problem in odd dimension $D$ has an $so(2,
D+1)$ dynamical symmetry, and gave a solution to the problem by
algebraic means using a particular irreducible unitary highest
weight representation of the dynamical symmetry. This work has also
been extended to even dimensions (but for restricted classes of
magnetic monopoles) in \cite{M2}.

A natural problem is to introduce supersymmetries into the Kepler
problem and its generalisations and to study the resulting
supersymmetric quantum mechanical systems. The $\cN=2$
supersymmetric case was studied in \cite{KLPW} in arbitrary
dimensions but without magnetic monopoles. An $so(D+1)$ dynamical
symmetry remained in this case, which helped to obtain the bound states spectrum and the
multiplicities of the eigenvalues.

In this paper we investigate the Schr\"{o}dinger equation on the
superspace $\R^{D|2n}$ involving a potential inversely proportional
to the negative of the $osp(D|2n)$ invariant ``distance" away from
the origin (see equation \eqref{Hamiltonian}). We shall refer to the
study of this eigenvalue problem as the quantum Kepler problem on
the superspace $\R^{D|2n}$ without magnetic monopoles. As we shall see,
the system is integrable, and will be solved by algebraic means.
It will be very interesting to extend the study to include
couplings to magnetic monopoles.

Schr\"{o}dinger equations on superspaces were studied by Delbourgo
in the late 80s \cite{D1} as a method to incorporate spin. This
developed into a fruitful programme (see, e.g., \cite{DJW} and
references therein) on using supermanifolds to describe gauge
symmetries and also to explain internal degrees of freedom of
elementary particles (see \cite{D2, D3} for recent developments). We
hope that the present work and further studies will provide useful
mathematical information for the programme of Delbourgo and
co-workers.

Our primary interest in this paper is the integrability of the
quantum Kepler problem on superspace without magnetic monopoles, and
also the relevant representations of its dynamical supersymmetry. We
shall give the precise definition of the quantum Kepler problem on
the superspace $\R^{D|2n}$ in Section \ref{Keplerproblem}, and
reformulate the problem algebraically following ideas of Barut and
Bornzin \cite{BB}. We show in Theorem \ref{osp-dynamicalsymmetry}
that the problem has an $osp(2, D+1|2n)$ dynamical supersymmetry,
where the generators of the orthosymplectic Lie superalgebra are
constructed explicitly. The bound states are shown to form an
infinite dimensional irreducible highest weight module of this Lie
superalgebra (Theorem \ref{main-rep}), and a thorough understanding
of the structure of this module is also given. Using this
information we obtain the bound state spectrum and also the
corresponding eigenspaces in Theorem \ref{main-spec}.

It is quite remarkable that the Kepler problem
remains integrable when generalised to superspaces
(see Remarks \ref{comments1} and \ref{comments2}). Furthermore,
the appearance of the dynamical supersymmetry and the way in which
its representation theory enables us to solve the problem all appear to be
quite fascinating from the point of view of the theory of
Lie superalgebras. It is well known that even the finite dimensional
representations of orthosymplectic
Lie superalgebras are extremely hard to study and very little is known about them.
Thus it is a nice surprise that the infinite dimensional irreducible representation of
$osp(2, D+1|2n)$ appearing in the problem can be understood for all $D>2n+1$. Therefore, results in this paper should be
of interest to the representation theory of Lie superalgebras as well.

\section{Quantum Kepler problem on superspace}\label{Keplerproblem}
In this section we introduce the quantum Kepler problems on
superspaces, and also give an algebraic formulation for the problems
following the strategy of Barut and Bornzin \cite{BB}.

\subsection{Generalities}\label{subsect-generalities}

Let $\R^{D|2n}$ denote the superspace with $D$ even dimensions and
$2n$ odd dimensions. Denote by $X^a$ with $a=1, 2, \dots, D+2n$ the
coordinate of the superspace, where $X^a$ is even if $a\le D$ and
odd if $a>D$. We shall assign to $\R^{D|2n}$ the  metric $\eta =
(\eta_{a b}) =
\begin{pmatrix} I_D & 0 & 0\\ 0 & 0 &-I_n\\ 0 & I_n &
0\end{pmatrix}$, where $I_D$ and $I_n$ are the identity matrices of
sizes $D\times D$ and $n\times n$ respectively. By a function on
$\R^{D|2n}$ we shall mean a map from $\R^D$ to the complex Grassmann
algebra $\Lambda_{2n}$ generated by the odd coordinates $X^{\nu}$ with $\nu=D+1,
D+2, \dots, D+2n$. Denote $\partial_a =\frac{\partial}{\partial
X^a}$ which acts on functions from the left. Let $X_a =
\sum_b \eta_{a b} X^b$ and $\partial^a = \sum_b \eta^{a b}
\partial_b$, where $\eta^{a b}$ are the entries of $\eta^{-1}$.
Set $\Delta = \sum_a \partial^a \partial_a$.

Given a function $\cV(X)$ on $\R^{D|2n}$ which is assumed
to be even in the Grassmann variables, we introduce the operator
\[ H = -\frac{1}{2}\Delta + \cV(X), \]
which will be referred to as a quantum Hamiltonian operator.
Our broad aim is to investigate the eigenvalue problem for
the quantum Hamiltonian operator, that is, to solve the Schr\"{o}dinger equation
\begin{eqnarray} \label{Schroedinger}
H\Psi = \cE \Psi,
\end{eqnarray}
where the eigenvalue $\cE$ is required to belong to $\R$ (thus $\cV(X)$ has to be even).
We shall be particularly interested in systems of the form of
\eqref{Schroedinger} which are integrable.

For each given $\cV(X)$, we shall need to specify the class of
functions on $\R^{D|2n}$, to which the solutions of the
Schr\"{o}dinger equation belong. For the potential corresponding to
the Kepler problem, this will be discussed in some detail in the
next subsection. Here we merely point out that the eigenfunction
$\Psi$ is a polynomial in the odd coordinates with coefficients
being complex valued functions on $\R^D$, which will be referred to
as coefficient functions.

\begin{remark}\label{comments1}
The Schr\"{o}dinger equation on $\R^{D|2n}$ is equivalent to a system of
partial differential equations on $\R^D$ for the coefficient functions.
\end{remark}

Note that since the Hamiltonian operator is even,
the Schr\"{o}dinger equation separates into two independent equations for
the even and odd parts of $\Psi$ respectively.

\subsection{Quantum Kepler problem on superspace}\label{Kepler}

Let us now introduce the quantum Hamiltonian operator which we shall
study in this paper. Let $R = \left(\sum_a X^a X_a
\right)^{\frac{1}{2}}$, which is only defined away from the origin
of $\R^D$, and should be interpreted as a polynomial in the odd
coordinates. More precisely, let $r^2=\sum_{i, j=1}^D X^i\eta_{ i
j}X^j$ and $\Theta^2 = \sum_{\mu, \nu=D+1}^{D+2n} X^\mu\eta_{ \mu
\nu}X^\nu$. Then $R= r \sqrt{1+\frac{\Theta^2}{r^2}}$, where
$\sqrt{1+\frac{\Theta^2}{r^2}}$ should be understood as a Taylor
expansion in $\frac{\Theta^2}{r^2}$. Since the odd coordinates are
Grassmannian, the expansion terminates at order $n$ in
$\frac{\Theta^2}{r^2}$, and we have a polynomial in the odd
coordinates. We have
\begin{eqnarray}\label{dR}
\partial_a(R) = \frac{X_a}{R}, \quad \Delta (R^2) = 2 d, \quad \text{where $d=D-2n$}.
\end{eqnarray}

We shall take the following quantum Hamiltonian operator
\begin{eqnarray} \label{Hamiltonian}
H = -\frac{1}{2}\Delta - \frac{1}{R}.
\end{eqnarray}
Our purpose is to determine the spectrum of $H$ and the
corresponding eigenvectors. In the remainder of this paper, we shall
only consider the Schr\"odinger equation \eqref{Schroedinger} with
this quantum Hamiltonian.

Let us now specify the class of functions on $\R^{D|2n}$ to which
the eigenfunctions belong. We should mention that the so-called
superanalysis (analysis of functions on superspace) is yet to
develop into a coherent theory. It will take us too far astray to investigate
superanalysis in any depth here, as we shall adopt
an algebraic approach to the quantum Kepler problem on superspace, which
by-passes many analytic issues. The rather superfluous discussion below
on functions on $\R^{D|2n}$ suffices for us to get by.

The Grassmann algebra $\Lambda_{2n}$ generated by the odd
coordinates  is $\Z_+$-graded with
$X^\mu$  ($1+D\le \mu\le 2n+D$) having degree $1$. Let $\zeta_s$ ($0\le s\le 2^{2n}-1$) be a
homogeneous basis of $\Lambda_{2n}$ consisting of products of the
odd coordinates, and denote by $deg(\zeta_s)$ the degree of
$\zeta_s$. We order the basis elements in such a way that
$deg(\zeta_s)\le deg(\zeta_{s+1})$.

Introduce the conjugate linear algebra automorphism $\bar{\ }:
\Lambda_{2n}\longrightarrow \Lambda_{2n}$ defined by
\[ \overline{X^\mu} = X_\mu, \quad \mu=D+1, D+2, \dots, D+2n. \]
As usual, conjugate linear means that for any
$\lambda=c_1\lambda_1+c_2\lambda_2$ with $\lambda_1,
\lambda_2\in\Lambda_{2n}$ and $c_1, c_2\in\C$, $\bar\lambda =
\bar{c}_1\bar{\lambda}_1+\bar{c}_2\bar{\lambda}_2$, where $\bar c_1$
and $\bar c_2$ are the complex conjugates of $c_1$ and $c_1$. Also,
being an algebra automorphism, the map $\bar{\ }$ obeys the rule
$\overline{\lambda_1\lambda_2}=\bar\lambda_1\bar\lambda_2$. Note
that $\bar\zeta_s \zeta_s\ne 0$  if $deg(\zeta_s)\le n$, but
$\bar\zeta_t\zeta_t=0$ if $deg(\zeta_t)>n$.

The conjugate linear automorphism on $\Lambda_{2n}$ extends to the
superalgebra of functions on $\R^{D|2n}$ in a natural way, and we
shall still denote the resulting map by $\bar{\ }$. More explicitly,
write a function $\Psi$ as $\Psi=\sum_s\zeta_s \psi_s$ where the
$\psi_s$ are complex valued functions on $\R^D$. Then $\bar\Psi =
\sum_s \bar\zeta_s \bar\psi_s$, where $\bar\psi_s$ is the usual
complex conjugate of $\psi_s$. For any two functions $\Phi$ and
$\Psi$ on $\R^{D|2n}$, we let $ \langle\Phi\mid \Psi\rangle =
\int_{\R^D}{\overline{\Phi}} \Psi $ if the integral over $\R^D$
exists (thus lies in $\Lambda_{2n}$).

Let $\cF$ denote the set of functions on $\R^{D|2n}$ such that
for every $\Psi\in \cF$
\begin{enumerate}
\item the integral $\langle\Psi\mid \Psi\rangle =\int_{\R^D}{\overline{\Psi}} \Psi$ exists; and
\item
the coefficient functions of $\Psi$ are twice differentiable on
$\R^D\backslash \{0\}$.
\end{enumerate}
Then for any $\Psi=\sum_s\zeta_s \psi_s\in\cF$, the coefficient
function $\psi_0$, which will be called the body of $\Psi$, must be
square integrable in the usual sense. However note that all
functions $\Psi$ satisfying $\psi_s=0$ for all $s\le 2^n-1$ belong
to $\cF$.

\begin{remark}
Even though our definition of $\cF$ imposes conditions on
all the coefficient functions $\psi_s$ with $s\le 2^n-1$,
we have to some extend followed the type of thinking common in physics that
appropriate conditions only need to be imposed on the body (if the body
is nonzero) of a function and then the supersymmetry of the physical
problem will determine properties of the function as a whole.
\end{remark}

\begin{remark}
When defining $\cF$, one may be tempted to use the more
``natural" valuation $\int_{\R^{D|2n}}{\overline{\Psi}} \Psi$ instead, where
the integration over the odd coordinates is given by the Berezin
integral. However, this does not make the body of $\Psi$ square
integrable.
\end{remark}

We require that a solution of the Schr\"odinger equation
\eqref{Schroedinger} with the quantum Hamiltonian operator
\eqref{Hamiltonian} belongs to the set $\cF$. The complex vector
space spanned by all the solutions is evidently stable under the
action of the quantum Hamiltonian operator, and a major aim is to
understand this vector space. It is also this vector space which
the dynamical supersymmetry algebra acts on. As we shall see later,
the space in fact forms an irreducible module
over the dynamical supersymmetry algebra.

\medskip

Note that $\overline{\Theta^2} = \Theta^2$, and hence $\bar R=R$. Thus for any function $\Psi$, we have
\[ \overline{H\Psi} = H\overline\Psi. \]
If $\Psi$ is a solution of the Sch\"odinger equation satisfying
$\overline\Psi=\Psi$, then the corresponding eigenvalue will
necessarily be real. Thus we may regard the
quantum Hamiltonian operator as Hermitian in a generalised sense.

We shall further restrict ourselves to considering only the bound
states, that is, the eigenvectors of $H$ associated with negative
eigenvalues. In order to have bound state solutions in $\cF$, we
need to impose the following condition on the superspace:
$$D>2n+1.$$ In this case, the ground state eigenvalue $\cE_0$ and
the associated eigenvector $\Psi_0$ of the system are given by
\[ \cE_0 = -\frac{1}{2}\left(\frac{2}{d-1}\right)^2,
\quad \Psi_0 = \exp\left(- \frac{2}{d-1} R\right).\] Note that
$\int_{\R^D}\bar\Psi_0 \Psi_0$ exists thus $\Psi_0\in\cF$.  It is
also easy to check that $\Psi_0$ indeed satisfies the Schr\"odinger
equation with the energy eigenvalue $\cE_0$ (also see equation
\eqref{radial-part} and discussions after).

\subsection{Algebraic formulation}

We shall solve the quantum Kepler problem algebraically by using the
representation theory of a dynamical supersymmetry algebra,
which will be constructed later. For
this purpose we need to reformulate the Schr\"odinger equation \eqref{Schroedinger}
algebraically for the quantum Hamiltonian \eqref{Hamiltonian}.

It is well known that the following differential operators
\begin{eqnarray}\label{osp-small}
J_{a b} = X_a \partial_b - (-1)^{[a][b]} X_b \partial_a
\end{eqnarray}
form the orthosymplectic Lie superalgebra $osp(D|2n)$ with the
commutation relations (see, e.g., \cite[(40)]{JG})
\begin{eqnarray}\label{osp}
[J_{a b}, J_{c d}] = \eta_{c b} J_{a d} + (-1)^{[a]([b]+[c])}\eta_{d a} J_{b c}
- (-1)^{[c][d]}\eta_{d b} J_{a c} - (-1)^{[a][b]}\eta_{c a} J_{b d}.
\end{eqnarray}
Here $[a]=0$ if $a\le D$ and $[a]=1$ otherwise. As customary, $[A,
B]$ represents the usual commutator unless both $A$ and $B$ are odd
and in that case $[A, B]=A B + B A$. One can easily check that
\begin{eqnarray}\label{RDelta-inv}
[J_{a b}, \Delta] =0, \quad [J_{a b}, R]=0.
\end{eqnarray}
Therefore, $[J_{a b}, H]=0$ for all $a, b$,  and the system has an $osp(D|2n)$ symmetry. Let
\[
E=\sum_{a=1}^{D+2n} X^a \partial_a, \quad T = E + \frac{d-1}{2},
\]
where $E$ is the Euler operator. The following lemma is a
generalisation of a result in \cite[Appendix]{BB} to the superspace
setting.
\begin{lemma}\label{so(2,1)}
Define the following operators
\begin{eqnarray}
J_{-2}=T, \quad J_{-1} = \frac{i}{2} R \left(-\Delta -1\right), \quad
J_{0} =\frac{i}{2} R \left(-\Delta +1\right).
\end{eqnarray}
\begin{enumerate}
\item The operators satisfy
the commutation relations of the Lie algebra $so(2, 1)$:
\begin{eqnarray} \label{so-commut}
[J_{-1}, J_{0} ] =  J_{-2}, \quad [J_{-2}, J_{-1} ]
 = - J_{0},  \quad  [J_{0}, J_{-2} ] =  J_{-1}.
\end{eqnarray}
\item This Lie algebra commutes with the Lie superalgebra $osp(D|2n)$
spanned by the operators $J_{a b}$:
\[ [J_{a b}, J_0] = [J_{a b}, J_{-1}] = [J_{a b}, J_{-2}]=0, \quad \forall a, b.\]
\end{enumerate}
\end{lemma}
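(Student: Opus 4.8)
\medskip

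The plan is to reduce everything to a handful of elementary identities for the Euler operator $E$, the Laplacian $\Delta$ and the radial function $R$, and then assemble the six commutators; all manipulations take place on $\R^D\backslash\{0\}$, where $R$ is defined. Part (2) is then almost immediate: by \eqref{RDelta-inv} we already have $[J_{ab},\Delta]=[J_{ab},R]=0$, so it suffices to show $[J_{ab},E]=0$ (hence $[J_{ab},T]=0$), since $J_{-1}$ and $J_0$ are polynomials in $R$, $\Delta$ and scalars. For this I would record the operator identities $[E,X^c]=X^c$ and $[E,\partial_c]=-\partial_c$, each an easy consequence of $[\partial_c,X^d]=\delta^d_c$ and the Koszul sign rule; then $[E,X_a\partial_b]=[E,X_a]\partial_b+X_a[E,\partial_b]=0$ for all $a,b$, whence $[E,J_{ab}]=0$.

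For part (1) I would first derive, using \eqref{dR} and $\sum_aX^aX_a=R^2$, the two ``scaling'' relations $[T,R]=[E,R]=E(R)=R$ and $[T,\Delta]=[E,\Delta]=-2\Delta$ (the latter from $[E,\partial_c]=-\partial_c$). These give $[T,R\Delta]=[T,R]\Delta+R[T,\Delta]=-R\Delta$, and substituting $J_{-1}=\tfrac i2(-R\Delta-R)$ and $J_0=\tfrac i2(-R\Delta+R)$ one reads off at once $[J_{-2},J_{-1}]=[T,J_{-1}]=\tfrac i2(R\Delta-R)=-J_0$ and $[J_0,J_{-2}]=-[T,J_0]=-\tfrac i2(R\Delta+R)=J_{-1}$.

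The only genuine computation is $[J_{-1},J_0]$. Expanding and using bilinearity of the commutator together with $[R,R]=0$, one has $[J_{-1},J_0]=-\tfrac14\big[R(-\Delta-1),\,R(-\Delta+1)\big]=-\tfrac14\cdot(-2)\,R[\Delta,R]=\tfrac12\,R[\Delta,R]$, so everything hinges on the first-order operator $[\Delta,R]$. I would compute it from the super-Leibniz rule for the second-order operator $\Delta=\sum_a\partial^a\partial_a$, which for an even function $g$ reads
\[ [\Delta,g]=\Delta(g)+\sum_a\Big((-1)^{[a]}(\partial_a g)\partial^a+(\partial^a g)\partial_a\Big). \]
Taking $g=R$ and inserting $\partial_a(R)=X_a/R$, $\partial^a(R)=X^a/R$ together with $\Delta(R)=(d-1)/R$ — which follows from the given $\Delta(R^2)=2d$, the Leibniz rule, and the identities $\sum_a(\partial^aR)(\partial_aR)=\sum_a(-1)^{[a]}(\partial_aR)(\partial^aR)=R^{-2}\sum_aX^aX_a=1$ — one finds, after contracting the $\eta$-indices, $[\Delta,R]=\tfrac1R(2E+d-1)$. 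Hence $[R\Delta,R]=R[\Delta,R]=2E+d-1=2T$ and $[J_{-1},J_0]=\tfrac12\cdot 2T=T=J_{-2}$, which completes the $so(2,1)$ relations.

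The main obstacle is bookkeeping rather than anything conceptual: tracking the Koszul signs $(-1)^{[a][b]}$ through the Leibniz rule for $\Delta$ on $\R^{D|2n}$, and checking that the various ``mixed'' sums over $a$ recombine correctly into $E$ and into $\sum_aX^aX_a=R^2$, using that the metric is super-symmetric, $\eta_{ab}=(-1)^{[a][b]}\eta_{ba}$, rather than symmetric. Everything else is formal manipulation with the displayed identities.
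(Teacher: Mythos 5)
Your proposal is correct and follows essentially the same route as the paper: part (2) from $[E,J_{ab}]=0$ together with \eqref{RDelta-inv}, the two easy brackets from the scaling relations $[T,R]=R$ and $[T,\Delta]=-2\Delta$, and the reduction $[J_{-1},J_0]=\tfrac12 R[\Delta,R]$ combined with the identity \eqref{DeltaR}. The only addition is that you derive \eqref{DeltaR} from the super-Leibniz rule, which the paper states without proof; your derivation is sound.
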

\begin{proof} Note that the Euler operator satisfies $[E, X^a]=X^a$ and $[E, \partial_a]=-\partial_a$.
This immediately leads to $[E, J_{a b}]=0$.
Now part (2) easily follows from this commutation relation and also the commutation
relations \eqref{RDelta-inv}. To prove the first relation in \eqref{so-commut},
note that
$[J_{-1}, J_{0} ] = \frac{1}{2}R[\Delta, R]$. Using
\begin{eqnarray}\label{DeltaR}
[\Delta, R] = \frac{2}{R}\left(\frac{d-1}{2} + E\right),
\end{eqnarray}
we obtain the desired result. The other two relations are easily proven
by using properties of the Euler operator $E$.
\end{proof}

\begin{theorem}\label{formulation}
For the bound states (with $\cE<0$), let
\[
\Phi = g \Psi,  \quad \text{where\ \ } g=\exp\left(-T \ln\sqrt{-2\cE} \right).
\]
Then the Schr\"{o}dinger equation
\eqref{Schroedinger} is equivalent to
\begin{eqnarray}\label{Schroedinger-1}
h_{0} \Phi =-\frac{1}{\sqrt{-2\cE}}\Phi \quad \text{where \ } h_0=iJ_{0}.
\end{eqnarray}
\end{theorem}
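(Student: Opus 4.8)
The plan is to conjugate the Schr\"odinger equation \eqref{Schroedinger} by $g$ and to recognise the outcome as the eigenvalue equation for $h_0=iJ_0=\tfrac12R(\Delta-1)$. Write $\omega:=\sqrt{-2\cE}>0$, so $\cE=-\tfrac12\omega^2$. First I would record the scaling behaviour of $\Delta$ and $R$ under $T$: from $[E,X^a]=X^a$ and $[E,\partial_a]=-\partial_a$ (used already in the proof of Lemma~\ref{so(2,1)}) one gets $[T,\Delta]=[E,\Delta]=-2\Delta$, while $\partial_a(R)=X_a/R$ together with $\sum_a X^aX_a=R^2$ gives $[T,R]=[E,R]=R$, and hence $[T,R^{-1}]=-R^{-1}$. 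Since $g=\exp(-T\ln\omega)$, the identity $g\,A\,g^{-1}=e^{\ad(-T\ln\omega)}A$ applied to $A=\Delta$ and $A=R^{-1}$ yields
\[
g\,\Delta\,g^{-1}=\omega^2\Delta,\qquad g\,\frac1R\,g^{-1}=\omega\,\frac1R .
\]

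Next, setting $\Phi=g\Psi$ (so $\Psi=g^{-1}\Phi$), the equation $H\Psi=\cE\Psi$ becomes $(gHg^{-1})\Phi=\cE\Phi$, i.e.
\[
-\tfrac12\omega^2\,\Delta\Phi-\omega\,\frac1R\,\Phi=-\tfrac12\omega^2\,\Phi .
\]
Dividing by $-\omega$ and rearranging gives $\tfrac12\omega(\Delta-1)\Phi=-\frac1R\Phi$, and multiplying this on the left by $R$ then gives $\omega\,h_0\Phi=-\Phi$, that is, $h_0\Phi=-\frac1{\sqrt{-2\cE}}\Phi$, which is \eqref{Schroedinger-1}. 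Every step here is reversible — conjugation by $g$, division by the nonzero scalar $-\omega$, and left multiplication by $R$ (note $R$ is an invertible operator on functions on $\R^D\backslash\{0\}$, where the eigenfunctions live) — so \eqref{Schroedinger} and \eqref{Schroedinger-1} are indeed equivalent.

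The only point requiring a little care is the meaning of $g$ as an operator: the bound states need not be homogeneous under the Euler operator, so $g$ should be understood either through the $\ad$-conjugation formulas above, which is all the argument actually uses, or via its action on the $E$-homogeneous components of a function; in either case $g$ is invertible with inverse $\exp(T\ln\omega)$, and the two displayed conjugation identities hold as genuine identities of operators on functions on $\R^{D|2n}$. Beyond this bookkeeping I expect no real obstacle, the argument being the short computation above.
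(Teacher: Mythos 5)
Your proof is correct and follows essentially the same route as the paper: the paper writes the equation as $g\,R(H-\cE)\,\Psi=0$ and applies the very same conjugation identities $gRg^{-1}=\tfrac{1}{\sqrt{-2\cE}}R$ and $g\Delta g^{-1}=-2\cE\,\Delta$, the only (immaterial) difference being that it multiplies by $R$ before conjugating rather than after. Your extra remarks on reversibility and on the invertibility of $R$ and $g$ are sound and merely make explicit what the paper leaves implicit.
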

\begin{proof}
Let $\Theta= R(H-\cE)$. The Schr\"{o}dinger equation can be re-written
as $g\Theta\Psi=0$. Using the following relations
\[g R g^{-1} =\frac{1}{\sqrt{-2\cE}} R, \quad g \Delta g^{-1}
=-2{\cE}\Delta  \]
in the equation we obtain \eqref{Schroedinger-1}.
\end{proof}

In the remainder of this paper we shall consider bound states only.
The algebraic formulation of the quantum Kepler problem allows us
to obtain the spectrum of the Hamiltonian for the bound states by
using the representation theory of the algebras described in Lemma
\ref{so(2,1)}. However, in order to determine the multiplicities of
the eigenvalues and also to construct the corresponding
eigenfunctions, we need to explore a larger dynamical symmetry of
the problem. This is done in the next section.

\section{Dynamical supersymmetry}\label{Dynamicalsymmetry}
In this section we shall show that the quantum Kepler problem in the
superspace $\R^{D|2n}$ has a dynamical supersymmetry described by
the orthosymplectic Lie superalgebra $osp(2, D+1|2n)$. The
generators and commutation relations of the superalgebra will be
given explicitly. A parabolic subalgebra and some nilpotent
subalgebra of the dynamical supersymmetry algebra will be studied
in detail, which play crucial roles in solving the quantum Kepler
problem.

\subsection{Dynamical supersymmetry}
We have the following result.
\begin{lemma} \label{Jia}
Let $\Gamma_a= R\partial_a$ for all $a=1, 2, \dots, D+2n$.
\begin{enumerate}
\item
We have $[J_{-2}, \Gamma_a]=0$, and
\begin{eqnarray}
\begin{aligned}
A_a&:=[J_{-1}, \Gamma_a]= \frac{i}{2}X_a(\Delta + 1) - i T\partial_a, \\
M_a&:= [J_{0}, \Gamma_a]= \frac{i}{2}X_a(\Delta - 1) - i T\partial_a.
\end{aligned}
\end{eqnarray}

\item
The operators $\Gamma_{a}$, $A_a$ and $M_a$ satisfy the following commutation relations
\begin{eqnarray} \label{osp-ext}
\begin{aligned}
&{[ \Gamma_a, \Gamma_b ]} = J_{a b}, &&[\Gamma_a, A_b]=-\eta_{b a}J_{-1}, && [\Gamma_a, M_b]=-\eta_{b a}J_{0},\\
& [A_a, A_b] =-J_{a b}, && [M_a, M_b] = J_{a b}, && [A_a, M_b]=-\eta_{b a} J_{-2}.
\end{aligned}
\end{eqnarray}
\end{enumerate}
\end{lemma}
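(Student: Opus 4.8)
The plan is to verify everything by direct computation from the definitions $\Gamma_a=R\partial_a$, $J_{-2}=T=E+\tfrac{d-1}{2}$, $J_{-1}=-\tfrac{i}{2}R(\Delta+1)$, $J_0=-\tfrac{i}{2}R(\Delta-1)$, using a small toolkit of operator identities and, wherever possible, the super-Jacobi identity together with Lemma \ref{so(2,1)} to reduce the number of genuinely heavy calculations. The toolkit I would assemble first: $[E,X^a]=X^a$, $[E,\partial_a]=-\partial_a$ (noted in the proof of Lemma \ref{so(2,1)}), hence $[E,R]=R$ (homogeneity of $R$ in degree $1$) and $[E,\Delta]=-2\Delta$; the operator identity $\partial_a\circ R=R^{-1}X_a+R\partial_a$ from $\partial_a(R)=X_a/R$ in \eqref{dR} (valid since $R$ is even); the relation $\Delta\circ R=R\Delta+2R^{-1}T$ from \eqref{DeltaR}; the elementary $[\partial_a,X_b]=\eta_{ba}$ and $[\Delta,X_a]=2\partial_a$; and $[J_{ab},R]=[J_{ab},\Delta]=0$ from \eqref{RDelta-inv}. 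Two bookkeeping facts are worth recording in advance: $\eta_{ba}=(-1)^{[a][b]}\eta_{ab}$ and $J_{ba}=-(-1)^{[a][b]}J_{ab}$ (this is exactly what makes the right-hand sides of \eqref{osp-ext} graded-(anti)symmetric, as they must be), together with the two shortcuts $J_0-J_{-1}=iR$ and $A_a-M_a=iX_a$, immediate from the definitions.

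For part (1): $[J_{-2},\Gamma_a]=[E,R\partial_a]=[E,R]\partial_a+R[E,\partial_a]=R\partial_a-R\partial_a=0$. For $A_a=[J_{-1},\Gamma_a]$ one expands $[R(\Delta+1),R\partial_a]=[R\Delta,R\partial_a]+[R,R\partial_a]$; with $\Delta\circ R=R\Delta+2R^{-1}T$ and $\partial_a\circ R=R^{-1}X_a+R\partial_a$ this gives $[R,R\partial_a]=-X_a$ and $[R\Delta,R\partial_a]=2T\partial_a-X_a\Delta$, hence $A_a=-\tfrac{i}{2}(2T\partial_a-X_a(\Delta+1))=\tfrac{i}{2}X_a(\Delta+1)-iT\partial_a$. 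The formula for $M_a$ follows the same way from $J_0=-\tfrac{i}{2}R(\Delta-1)$, or simply from $M_a=A_a-iX_a$.

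For part (2): the relation $[\Gamma_a,\Gamma_b]=J_{ab}$ is immediate since $\partial_a\circ R=R^{-1}X_a+R\partial_a$ gives $\Gamma_a\Gamma_b=X_a\partial_b+R^2\partial_a\partial_b$, and the graded (anti)commutator kills the symmetric part $R^2\partial_a\partial_b$, leaving $X_a\partial_b-(-1)^{[a][b]}X_b\partial_a=J_{ab}$. Next I would compute $[\Gamma_a,M_b]$ directly: expanding $M_b=\tfrac{i}{2}X_b\Delta-\tfrac{i}{2}X_b-iT\partial_b$ and applying the toolkit, the only awkward terms (proportional to $R^{-1}X_bT\partial_a$) cancel in pairs and one is left with $\tfrac{i}{2}\eta_{ba}R(\Delta-1)=-\eta_{ba}J_0$. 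Then $[\Gamma_a,A_b]=[\Gamma_a,M_b]+i[\Gamma_a,X_b]=-\eta_{ba}J_0+i\eta_{ba}R=-\eta_{ba}J_{-1}$, using the short computation $[\Gamma_a,X_b]=\eta_{ba}R$ and $J_0-J_{-1}=iR$. For the remaining three brackets I would first establish $[J_0,M_b]=-\Gamma_b$ by a direct expansion of $[\,-\tfrac{i}{2}R\Delta+\tfrac{i}{2}R,\ \tfrac{i}{2}X_b\Delta-\tfrac{i}{2}X_b-iT\partial_b\,]$ in which every second- and first-order term cancels except $-R\partial_b$; then the super-Jacobi identity, together with $M_a=[J_0,\Gamma_a]$ and $[\Gamma_a,M_b]=-\eta_{ba}J_0$, gives $[M_a,M_b]=[J_0,[\Gamma_a,M_b]]-[\Gamma_a,[J_0,M_b]]=-\eta_{ba}[J_0,J_0]+[\Gamma_a,\Gamma_b]=J_{ab}$. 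Finally, using $A_a=M_a+iX_a$, $[X_a,X_b]=0$, and the short computation $[M_a,X_b]=iJ_{ab}-i\eta_{ba}J_{-2}$ (hence $[X_a,M_b]=iJ_{ab}+i\eta_{ba}J_{-2}$), one gets $[A_a,A_b]=[M_a,M_b]+i[M_a,X_b]+i[X_a,M_b]=-J_{ab}$ and $[A_a,M_b]=[M_a,M_b]+i[X_a,M_b]=-\eta_{ba}J_{-2}$. As an alternative, all six relations can be ground out by brute force from the explicit formulas for $\Gamma_a,A_a,M_a$.

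The main obstacle is not conceptual but the sign bookkeeping: every bracket of two odd generators is an anticommutator, the metric $\eta$ is symmetric on the even block and antisymmetric on the odd block, and the identities $\eta_{ba}=(-1)^{[a][b]}\eta_{ab}$, $J_{ba}=-(-1)^{[a][b]}J_{ab}$ must be applied consistently. The places where mistakes are most likely are the two direct expansions $[\Gamma_a,M_b]$ and $[J_0,M_b]$, where several differential operator terms have to cancel exactly. Organising each computation so that $\eta$- and $J$-symbols always carry a fixed index order, and checking the purely bosonic specialisation $n=0$, should keep this under control.
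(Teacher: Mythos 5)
Your proposal is correct. I checked the intermediate identities it rests on --- $[R,R\partial_a]=-X_a$, $[R\Delta,R\partial_a]=2T\partial_a-X_a\Delta$, $[\Gamma_a,M_b]=\tfrac{i}{2}\eta_{ba}R(\Delta-1)=-\eta_{ba}J_0$, $[J_0,M_b]=-\Gamma_b$, and $[M_a,X_b]=iJ_{ab}-i\eta_{ba}J_{-2}$ --- and they all hold, so each of the six brackets in \eqref{osp-ext} comes out as stated. Your route genuinely differs from the paper's in part (2): the paper computes $[\Gamma_a,A_b]$, $[A_a,A_b]$, $[M_a,M_b]$ and $[A_a,M_b]$ separately by brute-force expansion, relying on the helper identities \eqref{calculations1} and \eqref{calculations2} and on ``carefully combining similar terms'', whereas you exploit the shifts $J_0-J_{-1}=iR$ and $A_a-M_a=iX_a$ together with the super-Jacobi identity (legitimate here, since the graded commutator on the associative superalgebra of differential operators satisfies it) to reduce the whole of part (2) to two genuine expansions, $[\Gamma_a,M_b]$ and $[J_0,M_b]$. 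This roughly halves the sign bookkeeping and makes the internal consistency of \eqref{osp-ext} (e.g.\ that $[\Gamma_a,A_b]$ and $[\Gamma_a,M_b]$ must differ by $i\eta_{ba}R$) structural rather than accidental; the price is that you must get $[J_0,M_b]=-\Gamma_b$ right, which is not stated in the paper but is forced by the target $osp(2,D+1|2n)$ relations and which I verified directly. The one step where care is genuinely needed is the conversion $[X_a,M_b]=-(-1)^{[a][b]}[M_b,X_a]$ using $\eta_{ba}=(-1)^{[a][b]}\eta_{ab}$ and $J_{ba}=-(-1)^{[a][b]}J_{ab}$; your stated outcome $[X_a,M_b]=iJ_{ab}+i\eta_{ba}J_{-2}$ is the correct one, and from it $[A_a,A_b]=-J_{ab}$ and $[A_a,M_b]=-\eta_{ba}J_{-2}$ follow as you say.
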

\begin{proof}
The lemma can be proved by straightforward computations. To prove part (1), note that
\[ A_a =-\frac{i}{2}R [\Delta, R] \partial_a + \frac{i}{2}R [\partial_a, R] (\Delta+1).\]
Using the first relation in \eqref{dR} and also \eqref{DeltaR}, we easily arrive at the result.
The formula for $M_a$ can be proved in exactly the same way.

The proof for the first formula in \eqref{osp-ext} is very simple,
thus we omit the details. Now consider $[\Gamma_a, A_b]$. We have
\[
[\Gamma_a, A_b] =\frac{i}{2}\eta_{a b}R -i[R\partial_a, T\partial_b] + \frac{i}{2}[R\partial_a, X_b\Delta].
\]
Using the following relations
\begin{eqnarray}\label{calculations2}
\begin{aligned}
&[R\partial_a, T\partial_b] = - (-1)^{[a][b]}\frac{X_b}{R}T\partial_a, \\
&[R\partial_a, X_b\Delta] = \eta_{a b}R\Delta - (-1)^{[a][b]}\frac{2X_b}{R}T\partial_a,
\end{aligned}
\end{eqnarray}
and also \eqref{DeltaR} we easily arrive at the desired formula. The commutator $[\Gamma_a, M_b]$
can be similarly proved by using relations \eqref{calculations2} and \eqref{DeltaR}.

Let us consider $[A_a, A_b]$. We have
\begin{eqnarray*}
\begin{aligned}
{-[A_a, A_b]} &= [T\partial_a, T\partial_b] -\frac{1}{2} [T\partial_a, X_b(\Delta+1)]\\
 & + (-1)^{[a][b]} \frac{1}{2} [T\partial_b, X_a(\Delta+1)]\\
 &+\frac{1}{4} \left(X_a[\Delta, X_b] - (-1)^{[a][b]}X_b[\Delta, X_a]\right)(\Delta+1).
\end{aligned}
\end{eqnarray*}
To simplify the  right hand side of the above equation, we use the following relations
\begin{eqnarray}\label{calculations1}
\begin{aligned}
&{[T\partial_a, T\partial_b]}=0, \qquad [\Delta, X_a]=2\partial_a, \\
&[T\partial_a, X_b]= T\eta_{b a} +(-1)^{[a][b]}X_b\partial_a, \\
&[T\partial_a, X_b\Delta]= \left(T\eta_{b a} - (-1)^{[a][b]}X_b\partial_a\right)\Delta.
\end{aligned}
\end{eqnarray}
Carefully combining similar terms we arrive at $[A_a, A_b]=-J_{a b}$. The commutators
$[M_a, M_b]$ and $[A_a, M_b]$ can be calculated in the same way by using the relations in \eqref{calculations1}.

\end{proof}

Let us introduce a $(D+2n+3)\times(D+2n+3)$ matrix $\left(\eta_{K L}\right)= \begin{pmatrix}I_{2, 1}&0\\ 0 & \eta\end{pmatrix}$,
where $I_{2, 1}=\begin{pmatrix}-1&0 & 0\\ 0&-1&0\\0&0&1\end{pmatrix}$ and $\eta$ is the metric of $\R^{D|2n}$. The indices $K$ and $L$
take values $-2, -1, 0, 1, 2, \dots, D$ ordered as shown. We set $[K] =0$ if $K=i\le 0$,
and $[K]=[a]$ if $K=a\ge 1$.

\begin{theorem}\label{osp-dynamicalsymmetry}
Introduce the operators $J_{K L}$ such that
$J_{KL}=-(-1)^{[K][L]}J_{L K}$ with
\begin{eqnarray}\label{dynamicalsymmetry}
\begin{aligned}
&J_{i j} = \epsilon_{i j k} J_k, \quad \text{for\ } i, j, k\in\{-2, -1, 0\}, \\
&J_{-2, a} = \Gamma_a, \quad
J_{0, a} = -A_{a}, \quad
J_{-1, a} = M_a,\\
&J_{a b} \ \text{is defined by \eqref{osp-small}},
\quad \text{for\ } a, b\in\{1, 2, \dots, D+2n\},
\end{aligned}
\end{eqnarray}
where $\epsilon_{i j k}$ is totally skew symmetric in the three
indices and $\epsilon_{-2, -1, 0}=1$. These operators form a basis for the
othosymplectic Lie superalgebra $osp(2, D+1|2n)$ with the
commutation relations
\begin{eqnarray}
\begin{aligned}
{[J_{K L}, J_{P Q}]} = & \eta_{P L} J_{K Q} + (-1)^{[K]([L]+[P])}\eta_{Q K} J_{L P}\\
&- (-1)^{[P][Q]}\eta_{Q L} J_{K P} - (-1)^{[K][L]}\eta_{P K} J_{L Q}.
\end{aligned}
\end{eqnarray}
\end{theorem}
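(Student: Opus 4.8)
The plan is to assemble the brackets already computed in Lemmas~\ref{so(2,1)} and~\ref{Jia}, and in \eqref{osp}, to rewrite them in the uniform notation $J_{KL}$, to compute the few brackets that remain, to check that the whole collection agrees with the claimed universal formula, and finally to verify linear independence so that $\{J_{KL}\}$ is a basis rather than merely a spanning set. First I would fix the dictionary $J_{ij}=\epsilon_{ijk}J_k$ (for $i,j\in\{-2,-1,0\}$), $J_{-2,a}=\Gamma_a$, $J_{-1,a}=M_a$, $J_{0,a}=-A_a$ and $J_{ab}$ as in \eqref{osp-small}, and then sort the relations already at hand by index block: \eqref{osp} supplies $[J_{ab},J_{cd}]$; Lemma~\ref{so(2,1)}(1) supplies the $so(2,1)$ brackets $[J_{ij},J_{kl}]$; Lemma~\ref{so(2,1)}(2) supplies $[J_{ab},J_{ij}]=0$; Lemma~\ref{Jia}(1) supplies $[J_{ij},J_{-2,c}]$; and \eqref{osp-ext} supplies every bracket $[J_{P,a},J_{Q,b}]$ among the $\Gamma_a$, $A_a$ and $M_a$ with $P,Q\in\{-2,-1,0\}$.

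Only two families of brackets remain. The first is $[J_{ab},\Gamma_c]$, $[J_{ab},A_c]$ and $[J_{ab},M_c]$; for these I would use that $R$, $\Delta$ and $T=E+\tfrac{d-1}{2}$ are all $osp(D|2n)$-invariant ($[J_{ab},R]=[J_{ab},\Delta]=0$ by \eqref{RDelta-inv}, and $[J_{ab},T]=0$ since $[J_{ab},E]=0$), so that the bracket of $J_{ab}$ with $\Gamma_c=R\partial_c$, or with $A_c$, $M_c$, acts only on the factors $X_c$ and $\partial_c$; since $X_c$ and $\partial_c$ transform in the vector representation of $osp(D|2n)$ by \eqref{osp-small}, so do $\Gamma_c$, $A_c$ and $M_c$, and one obtains $[J_{ab},\Gamma_c]=\eta_{cb}\Gamma_a-(-1)^{[a][b]}\eta_{ca}\Gamma_b$ together with the analogous identities for $A$ and $M$. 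The second family is $[J_{ij},A_c]$ and $[J_{ij},M_c]$; these I would obtain by applying the Jacobi identity to $A_c=[J_{-1},\Gamma_c]$ and $M_c=[J_0,\Gamma_c]$ in combination with the $so(2,1)$ relations and \eqref{osp-ext} — for instance $[J_{-2},A_c]=[[J_{-2},J_{-1}],\Gamma_c]+[J_{-1},[J_{-2},\Gamma_c]]=[-J_0,\Gamma_c]+0=-M_c$, and the rest come out the same way (some of them, alternatively, by a direct computation of exactly the type carried out in the proof of Lemma~\ref{Jia}, using \eqref{dR} and \eqref{DeltaR}).

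With every bracket available, I would then check block by block that the collection reproduces the universal relation $[J_{KL},J_{PQ}]=\eta_{PL}J_{KQ}+(-1)^{[K]([L]+[P])}\eta_{QK}J_{LP}-(-1)^{[P][Q]}\eta_{QL}J_{KP}-(-1)^{[K][L]}\eta_{PK}J_{LQ}$, using the block-diagonal form of $(\eta_{KL})$. In particular, the even part of the resulting algebra is $so(2,D+1)\oplus sp(2n)$: the $\binom{D+3}{2}$ antisymmetric generators $J_{KL}$ with $[K]=[L]=0$ close into $so(2,D+1)$ for the form $I_{2,1}\oplus I_D$ of signature $(2,D+1)$, the $n(2n+1)$ symmetric generators $J_{\mu\nu}$ with $[\mu]=[\nu]=1$ close into $sp(2n)$, and the $2n(D+3)$ odd generators transform as the tensor product of the two vector representations; this is precisely the structure of $osp(2,D+1|2n)$. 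Since the collection $\{J_{KL}\}$ is indexed exactly as the standard basis of $osp(2,D+1|2n)$ and the counts $\binom{D+3}{2}+n(2n+1)+2n(D+3)$ agree, it only remains to check linear independence. This is immediate: the $J_{KL}$ are differential operators that are plainly distinguished from one another (for instance $\Gamma_a=R\partial_a$ carries the non-polynomial factor $R$ while $A_a-M_a=iX_a$, and the $J_{ab}$ and the $J_k$ are already known to be linearly independent); alternatively, it follows at once from the simplicity of $osp(2,D+1|2n)$ when $D>2n+1$, since the Lie-superalgebra homomorphism from the abstract algebra onto the span of the $J_{KL}$ is nonzero and hence injective.

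The main obstacle will be the sign bookkeeping in the block-by-block verification: one must keep track of the Koszul signs $(-1)^{[K][L]}$, of the graded antisymmetry convention $J_{KL}=-(-1)^{[K][L]}J_{LK}$, and of the identifications $J_{-2,a}=\Gamma_a$, $J_{-1,a}=M_a$, $J_{0,a}=-A_a$ — the minus sign in the last being chosen precisely so that the mixed brackets come out correctly — and confirm that each bracket drawn from \eqref{osp}, \eqref{so-commut}, \eqref{osp-ext} and Lemma~\ref{Jia} matches the corresponding instance of the universal formula exactly. No single computation is difficult, but the cases are numerous and the signs unforgiving, so it pays to organize the verification by index block and to import the $osp(D|2n)$ and $so(2,1)$ substructures wholesale.
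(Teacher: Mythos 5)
Your proposal is correct and follows essentially the same route as the paper: the paper's proof likewise reduces everything to \eqref{osp}, Lemma \ref{so(2,1)} and Lemma \ref{Jia}, obtains the remaining brackets $[J_{ij},J_{kc}]$ from those lemmas (your Jacobi-identity computation is exactly what is meant), and disposes of $[J_{ab},J_{ic}]$ by a direct calculation of $[J_{ab},\Gamma_c]$ followed by the observation that $J_{ab}$ commutes with all $J_{ij}$, of which your invariance-of-$R$, $\Delta$, $T$ argument is a mild variant. The linear-independence and dimension-count check you append is not in the paper but is a correct and harmless supplement.
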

\begin{proof}
Lemma \ref{so(2,1)} and Lemma \ref{Jia} have verified all the
commutation relations except $[J_{a b}, J_{i c}]$ for $a, b, c\ge 1$
and $i\le0$, and some cases of $[J_{i j}, J_{k c}]$ with $i, j, k\le 0$
and $c\ge 1$. The relations for $[J_{i j}, J_{k c}]$ easily follow
from Lemma \ref{Jia}(1) and Lemma \ref{so(2,1)}(1). A direct
calculation gives
\[
[J_{a b}, J_{-2, c}] = \eta_{c b} J_{-2, a} - (-1)^{[b][c]} \eta_{c a} J_{-2, b}.
\]
Since $J_{a b}$ commutes with all $J_{i j}$ by
Lemma \ref{so(2,1)}(2), we may replace the index $-2$ in this equation by any $i\in\{-2, -1, 0\}$.
This completes the proof of the theorem.
\end{proof}

\subsection{Root system}
Denote by $\fg$ the complexification of the real Lie superalgebra
$osp(2, D+1|2n)$ spanned by the operators in \eqref{dynamicalsymmetry}, then $\fg\cong osp(D+3|2n)$. Let us now specify
the root system for $\fg$ that will be used for
characterising highest weight representations. We adopt notations
and conventions from \cite{K} (see also \cite{S}), but take a nonstandard root system for
the Lie superalgebra $\fg$. Dynkin diagrams corresponding to the root systems of
 $\fg$ for even and odd $D$ are respectively given by

\begin{center}
\begin{picture}(400, 10)(0, 5)

 \put(110, 10){\circle{10}}

\put(115, 10){\line(1, 0){10}}

\put(130, 10){\makebox(0,0){...}}

\put(135, 10){\line(1, 0){10}}

\put(150, 10){\circle{10}}

\put(155, 10){\line(1, 0){20}}

\put(180, 10){\filltype{shade}\circle*{10}}

\put(185, 10){\line(1, 0){20}}

\put(210, 10){\circle{10}}

\put(215, 10){\line(1, 0){10}}

\put(230, 10){\makebox(0,0){...}}

\put(235, 10){\line(1, 0){10}}

\put(250, 10){\circle{10}}

\put(255, 9){\line(1, 0){20}} \put(255, 11){\line(1, 0){20}}

\put(267, 10){\line(-2, 1){10}}
\put(267, 10){\line(-2, -1){10}}

\put(280, 10){\circle*{10}}

\put(340, 10){\makebox(0,0){if $D$ is even,}}
\end{picture}
\end{center}

\begin{center}
\begin{picture}(400, 10)(0, 5)

 \put(110, 10){\circle{10}}

\put(115, 10){\line(1, 0){10}}

\put(130, 10){\makebox(0,0){...}}

\put(135, 10){\line(1, 0){10}}

\put(150, 10){\circle{10}}

\put(155, 10){\line(1, 0){20}}


\put(180, 10){\filltype{shade}\circle*{10}}

\put(185, 10){\line(1, 0){20}}

\put(210, 10){\circle{10}}

\put(215, 10){\line(1, 0){10}}

\put(230, 10){\makebox(0,0){...}}

\put(235, 10){\line(1, 0){10}}

\put(250, 10){\circle{10}}

\put(263, 10){\line(2, -1){10}}\put(263, 10){\line(2, 1){10}}

\put(255, 9){\line(1, 0){20}}
\put(255, 11){\line(1, 0){20}}

\put(280, 10){\circle{10}}

\put(340, 10){\makebox(0,0){if $D$ is odd.}}

\end{picture}
\end{center}

\bigskip
\noindent Here the black and grey nodes correspond to odd simple roots.

When $D=2m$, there are $m+1+n$ nodes in the Dynkin diagram. Order
the nodes from left to right. They respectively correspond to the
simple roots
\[
\epsilon_0-\epsilon_1,  \epsilon_1-\epsilon_2, \dots,
\epsilon_{m-1}-\epsilon_m,  \epsilon_m-\delta_1,  \delta_1-\delta_2,
\dots,  \delta_{n-1}-\delta_n,  \delta_n.
\]
The element $h_0$ belongs to the Cartan subalgebra with
$\epsilon_0(h_0)=1$ and $\epsilon_i(h_0)=0=$ $\delta_j(h_0)$ for all
the other $\epsilon_i$ and all the $\delta_j$.

When $D=2m+1$, there are $m+2+n$ nodes, respectively corresponding
to the simple roots
\[
\epsilon_{-1}-\epsilon_0, \epsilon_0-\epsilon_1,
\epsilon_1-\epsilon_2, \dots, \epsilon_{m-1}-\epsilon_m,
\epsilon_m-\delta_1,  \delta_1-\delta_2, \dots,
\delta_{n-1}-\delta_n,  2\delta_n.
\]
In this case, $\epsilon_{-1}(h_0)=1$ while the evaluations of all
the other $\epsilon_i$ and all the $\delta_j$ on $h_0$ are zero.

An element in the weight space of $\fg$ will be written as an
$\left[\frac{D+3}{2}\right]+n$ tuple, which is the coordinate in the
basis consisting of the $\epsilon_i$ and
$\delta_j$. The basis is ordered in such a way that the
$\delta_j$ are positioned after all $\epsilon_i$,
and the $\epsilon_i$
appear in their natural order, and so do also the
$\delta_j$.

Since $osp(D+1|2n)$ is a regular subalgebra of $\fg$, we shall take
for it the root system compatible with that of $\fg$. This is
specified by the Dynkin diagram obtained by removing the left most
node from the Dynkin diagram of $\fg$.

\subsection{Parabolic subalgebra} In this subsection we discuss
subalgebras of the Lie superalgebra $\fg$. The main result
established is Proposition \ref{irred}, which is of crucial
importance for solving the Kepler problem on the superspace.
Unfortunately the material presented here is quite technical, so we have relegated some of
it to the Appendix.

Define the linear operator $ad_{h_0}$ on
$\fg$ by $ad_{h_0}(Y) = [h_0,  Y]$ for all $Y\in \fg$ (recall that $h_0=iJ_0$).
Now $\fg$ decomposes into a direct sum
$\fg = \fg_{-1} \oplus \fg_0 \oplus \fg_{+1}$
of eigenspaces of $ad_{h_0}$ with eigenvalues $+1$, $0$ and $-1$ respectively. Here
\begin{enumerate}
\item $\fg_{+1}$ is spanned by $i(J_{-1} - i J_{-2})$ and $M_a - i\Gamma_a$ ($a\ge 1$),
\item $\fg_{-1}$ is spanned by $i(J_{-1} + i J_{-2})$ and $M_a + i\Gamma_a$ ($a\ge 1$),
\item $\fg_{0}$ is spanned by $h_0$, $A_a$ and $J_{a b}$ ($a, b\ge 1$).
\end{enumerate}
The subspaces $\fg_{+1}$, $\fg_{-1}$ and $\fg_0$ all form
subalgebras of $\fg$. In particular, $\fg_0$ is the subalgebra
$gl_1\oplus osp(D+1|2n)$, and  $[\fg_{+1}, \fg_{+1}]= [\fg_{-1},
\fg_{-1}]=0$. Note that if $D$ is even, $\fg_{+1}$ is spanned by the
root vectors associated with the positive roots
\[
\epsilon_0, \epsilon_0\pm \epsilon_i, \forall i>0; \ \ \epsilon_0\pm
\delta_j, \forall j.
\]
If $D$ is odd, the subalgebra $\fg_{+1}$ is spanned by the root
vectors associated with the positive roots
\[\epsilon_{-1}\pm \epsilon_i, \forall i\ge 0; \ \ \epsilon_{-1}\pm \delta_j,
\forall j.
\]

Both $\fg_{+1}$ and $\fg_{-1}$ are stable under the adjoint action
of $\fg_0$ in the sense that $[\fg_0, \fg_{\pm 1}]= \fg_{\pm 1}$, thus form $\fg_0$-modules.
When restricted to modules for the $osp(D+1|2n)$ subalgebra of
$\fg_0$, both $\fg_{+1}$ and $\fg_{-1}$ are isomorphic to the natural
$osp(D+1|2n)$-module.

We have the following parabolic subalgebra of $\fg$:
\begin{eqnarray}\label{p}
\fp:= \fg_0 \oplus \fg_{+1}
\end{eqnarray}
with the nilpotent radical $\fg_{+1}$. Furthermore, $\fg=\fp\oplus\fg_{-1}$.

Denote by $\cU_\fg$ the complex linear span of products of the
operators in $\fg$.  Let $\cU_\fp$ and $\cU_{\fg_{-1}}$ be the sub superalgebras of $\cU_\fg$
generated by the elements of $\fp$ and $\fg_{-1}$ respectively. Then
the underlying vector space of $\cU_\fg$ is isomorphic to
$\cU_{\fg_{-1}}\cU_\fp$. Note that  $\cU_{\fg_{-1}}$ is $\Z_2$-graded
commutative.

\begin{remark}
The associative superalgebra $\cU_{\fg}$ is a quotient of
the universal enveloping algebra $\U(\fg)$ of $\fg=osp(D+3|2n)$.
Similar comments apply to $\cU_\fp$ and $\U(\fp)$ etc..
\end{remark}

The subalgebra $\cU_{\fg_{-1}}$ has a $\Z$-grading $\cU_{\fg_{-1}} =
\oplus_{l=0}^\infty\cU_{\fg_{-1}}(-l)$.
Let $K_0=i(J_{-1} + i J_{-2})$ and $K_a=M_a + i\Gamma_a$ ($a\ge 1$),
which form a basis of the subalgebra $\fg_{-1}$.
Then $\cU_{\fg_{-1}}(-l)$
is spanned by all the homogeneous polynomials of order $l$ in the
elements $K_A$ with $A=0, 1, \dots, D+2n$. Here we should note that
$K_A$ are odd in the $\Z_2$ grading for all $A>D$, and in this case
$(K_A)^2=0$. However, $K_0$ is even and $(K_0)^l\ne 0$ for all $l$.

Let $\eta_0=(\eta_{A B})$ with $A, B=0, 1, \dots, D+2n$ be the matrix
obtained from $(\eta_{K L})$ by removing the first two rows and
first two columns. This is the metric relative to which the $osp(D+1|2n)$
subalgebra of $\fg_0$ is defined. Now the (adjoint) action of $\fg_0$ on $\fg_{-1}$ naturally extends to
an action on $\cU_{\fg_{-1}}(-l)$.  Denote by $\eta_0(\fg_{-1}, \fg_{-1})$
the subspace of $osp(D+1|2n)$-invariants in $\cU_{\fg_{-1}}(-2)$. Then $\eta_0(\fg_{-1}, \fg_{-1})$ is
spanned by
\[ (K)^2 :=\sum_{A, B=0}^{D+2n} \eta^{B A}K_A K_B, \]
where $\eta^{A B}$ are the matrix elements of $\eta_0^{-1}$. We have the following result.

\begin{lemma}\label{traceless}
The operators $K_A$ satisfy $(K)^2=0$.
\end{lemma}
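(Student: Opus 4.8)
The plan is to verify the identity by a direct computation with the underlying differential operators. Combining the formulas for $J_{-1}$, $J_{-2}$ from Lemma \ref{so(2,1)} with those for $\Gamma_a$, $M_a$ from Lemma \ref{Jia}, the generators of $\fg_{-1}$ are
\[
K_0 = i(J_{-1}+iJ_{-2}) = \tfrac{1}{2}R(\Delta+1) - T, \qquad
K_a = M_a + i\Gamma_a = i\,L_a,\quad L_a := \tfrac{1}{2}X_a(\Delta-1) - (T-R)\partial_a,
\]
for $a = 1,\dots,D+2n$. Since the metric $\eta_0$ is block diagonal with $\eta_{00}=1$, we have $\eta^{00}=1$ and $\eta^{0a}=0$ for $a\ge1$, so
\[
(K)^2 = (K_0)^2 + \sum_{a,b\ge1}\eta^{ba}K_aK_b = (K_0)^2 - \sum_{a,b\ge1}\eta^{ba}L_aL_b,
\]
and it suffices to prove the operator identity $(K_0)^2 = \sum_{a,b\ge1}\eta^{ba}L_aL_b$.

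To evaluate the sum on the right I would expand $\sum_{a,b}\eta^{ba}L_aL_b$ into its four constituent pieces and move all coordinate multiplications to the left of all derivatives, using repeatedly $[\Delta,X_a]=2\partial_a$ from \eqref{calculations1}, $[E,\partial_a]=-\partial_a$, $[\partial_a,R]=X_a/R$ from \eqref{dR}, $[\Delta,R]=\tfrac{2}{R}T$ from \eqref{DeltaR}, $[R,E]=-R$, together with the contraction identities $\sum_a\eta^{ba}X_a = X^b$, $\sum_a\eta^{ba}\partial_a = \partial^b$ and their consequences $\sum_{a,b}\eta^{ba}X_aX_b = R^2$, $\sum_{a,b}\eta^{ba}X_a\partial_b = E$, $\sum_{a,b}\eta^{ba}\partial_a\partial_b = \Delta$. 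Because these contractions eliminate all of the $\eta^{ba}$ summations simultaneously, there is no need to distinguish even from odd indices $a$: the $\Z_2$-signs are absorbed uniformly via the graded symmetry $\eta_{ab}=(-1)^{[a][b]}\eta_{ba}$ of the orthosymplectic metric. After collecting like terms, $\sum_{a,b}\eta^{ba}L_aL_b$ is expressed as a polynomial in the non-commuting operators $R$, $\Delta$, $E$ (and $R^{-1}$) alone.

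In parallel I would expand $(K_0)^2 = \big(\tfrac12 R(\Delta+1) - T\big)^2$ using the same relations, in particular $RTR^{-1} = T-1$ (from $[R,T]=-R$) and $[\Delta,T] = 2\Delta$, reducing it likewise to a polynomial in $R$, $\Delta$, $E$. Matching the two polynomials term by term yields $(K_0)^2 = \sum_{a,b\ge1}\eta^{ba}L_aL_b$, hence $(K)^2 = 0$. The only real difficulty is bookkeeping — controlling operator orderings and the $R^{-1}$ factors through a lengthy but entirely mechanical reduction — and in keeping with the paper's conventions these computations would naturally be deferred to the Appendix. One could instead try to argue structurally, using that $(K)^2$ is the unique (up to scalar) $osp(D+1|2n)$-invariant in the $\Z_2$-graded commutative algebra $\cU_{\fg_{-1}}(-2)$, on which $ad_{h_0}$ acts with eigenvalue $-2$; but reducing $(K)^2=0$ to a manageable identity along these lines still requires essentially the same computational input.
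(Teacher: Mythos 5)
Your proposal is correct and follows essentially the same route as the paper: a direct expansion of $(K)^2$ into the elementary operators $R$, $\Delta$, $E$ via the relations \eqref{dR}, \eqref{DeltaR} and \eqref{calculations1}, starting from the same identifications $K_0=\tfrac12 R(\Delta+1)-T$ and $K_a=M_a+i\Gamma_a$ (the paper keeps $M_a$ and $\Gamma_a$ separate and computes $\sum_a M^aM_a$, $\sum_a\Gamma^a\Gamma_a$ and $i\sum_a(M^a\Gamma_a+\Gamma^aM_a)$ individually rather than working with your combined $L_a$, but that is only a reorganization of the same calculation). The one caveat is that you assert the final term-by-term matching rather than exhibiting it; since that matching is the entire content of the lemma, a complete write-up should record the intermediate identities --- the paper's versions are \eqref{K02}, \eqref{MM}, \eqref{GammaGamma} and \eqref{MGamma} --- before concluding that $K_0^2+\sum_{a=1}^{D+2n} K^aK_a=0$.
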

\begin{proof}
The proof of this claim involves a considerable amount of
calculations. Thus we shall present only the main formulae needed.
Let us first calculate $K_0^2$. We have
\begin{eqnarray*}
\begin{aligned}
K_0^2 &= \left(\frac{1}{2}R(\Delta+1)-T\right)^2 \\
&=\frac{1}{4} \left(R(\Delta+1)\right)^2-T R (\Delta+1) -
\frac{1}{2}[R(\Delta+1), T] + T^2.
\end{aligned}
\end{eqnarray*}
Using $[R(\Delta+1), T]=R(\Delta-1)$, we obtain
\begin{eqnarray}\label{K02}
K_0^2 &=&\frac{1}{4} \left(R(\Delta+1)\right)^2 - T R (\Delta+1) -
\frac{1}{2}R(\Delta-1) + T^2.
\end{eqnarray}

Now $\sum_{a=1}^{D+2n} K^aK_a = \sum_a ( M^a M_a + i(M^a \Gamma_a
+\Gamma^a M_a)- \Gamma^a \Gamma_a)$.  Note that
\[
\begin{aligned}
\sum_{a=1}^{D+2n} M^a M_a &= -\frac{1}{4}\sum_a
X^a(\Delta-1)X_a(\Delta-1) \\
&+ \frac{1}{2}\sum_a (X^a(\Delta-1) T\partial_a + T\partial^a
X_a(\Delta-1)) \\&- \sum_a T\partial^a T\partial_a.
\end{aligned}
\]
The first term can be expressed as $ - \frac{1}{4}R^2 (\Delta-1)^2 -
\frac{1}{4}\sum_a X^a[(\Delta-1), X_a](\Delta-1)$.  Using
$[(\Delta-1), X_a]= 2\partial_a$ we obtain
$ - \frac{1}{4}R^2 (\Delta-1)^2 -\frac{1}{2}E (\Delta-1). $
The second term can be simplified to
$ T(T+1/2)(\Delta-1) + \frac{1}{2}E(\Delta +1), $
while the third yields $-T(T+1)\Delta$. Combining these results
together we arrive at
\begin{eqnarray}\label{MM}
\sum_{a=1}^{D+2n} M^a M_a = - \frac{1}{4}R^2 (\Delta-1)^2
-\frac{1}{2}T(\Delta+1) - T^2+E.
\end{eqnarray}
It is easy to show that
\begin{eqnarray}\label{GammaGamma}
\sum_{a=1}^{D+2n} \Gamma^a \Gamma_a = E + R^2\Delta.
\end{eqnarray}
Now $2i\sum_{a=1}^{D+2n} (M^a \Gamma_a +\Gamma^a M_a)$ can be expressed as
\begin{eqnarray}\label{calculations}
\begin{aligned}
-\sum_{a}X^a(\Delta-1)R\partial_a -\sum_{a}R\partial^a X_a(\Delta-1)
+ 2\sum_a(T\partial^a R\partial_a + R\partial^a T\partial_a).
\end{aligned}
\end{eqnarray}
The first sum  can be rewritten as
$ \sum_{a} X^a[(\Delta-1), R]\partial_a + RE(\Delta-1). $
Using equation \eqref{DeltaR} we obtain $\sum_{a} X^a[(\Delta-1),
R]\partial_a= \sum_a\frac{2X^a}{R} T\partial_a= 2T\frac{1}{R}E$.
This leads to
\[
\sum_{a}X^a(\Delta-1)R\partial_a = 2T\frac{1}{R}E + RE(\Delta-1).
\]
We also have
\[
\sum_{a}R\partial^a X_a(\Delta-1)= R(d + E)(\Delta-1).
\]
Finally the third sum in \eqref{calculations} gives
\[
2\sum_a(T\partial^a R\partial_a + R\partial^a T\partial_a) =
2T\frac{1}{R}E +4TR\Delta.
\]
Combining the results together we obtain
\begin{eqnarray}\label{MGamma}
\begin{aligned}
i\sum_{a=1}^{D+2n} (M^a \Gamma_a +\Gamma^a M_a)
=\frac{1}{2}R(\Delta-1) + TR(\Delta+1).
\end{aligned}
\end{eqnarray}

Combining equations \eqref{MM}, \eqref{GammaGamma} and
\eqref{MGamma} we readily obtain
\begin{eqnarray*}
\begin{aligned}
\sum_{a=1}^{D+2n} K^a K_a = &- \frac{1}{4}R^2 (\Delta-1)^2-R^2\Delta
-\frac{1}{2}T(\Delta+1) \\
& + TR(\Delta+1)+\frac{1}{2}R(\Delta-1) - T^2.
\end{aligned}
\end{eqnarray*}
Using $R^2(\Delta+1)^2=(R(\Delta+1))^2 -2 T(\Delta+1)$, we
immediately arrive at
\begin{eqnarray}
\begin{aligned}
\sum_{a=1}^{D+2n} K^a K_a = - \frac{1}{4}(R(\Delta+1))^2 +
TR(\Delta+1)+\frac{1}{2}R(\Delta-1) - T^2.
\end{aligned}
\end{eqnarray}
Recalling equation \eqref{K02}, we see that
$K_0^2+ \sum_{a=1}^{D+2n} K^a K_a=0.$
This completes the proof.
\end{proof}

With the help of Lemma \ref{traceless}, we can prove the following
result, which will play a crucial role in understanding the $osp(2,
D+1|2n)$ representation appearing in the quantum Kepler problem.

\begin{proposition}\label{irred}
As a module for the subalgebra $osp(D+1|2n)$ of $\fg_0$, the
subspace $\cU_{\fg_{-1}}(-l)$ of $\cU_{\fg_{-1}}$ is irreducible and
isomorphic to the irreducible rank $l$ symmetric (in the
$\Z_2$-graded sense) tensor of the natural module for $osp(D+1|2n)$.
The $osp(D+1|2n)$ highest weight of $\cU_{\fg_{-1}}(-l)$ is given by $(l, 0,
\dots, 0)$.
\end{proposition}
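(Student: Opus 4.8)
The plan is to identify $\cU_{\fg_{-1}}(-l)$ with a quotient of the full rank-$l$ symmetric power of the natural module and then use Lemma \ref{traceless} to show this quotient is precisely the traceless (irreducible) symmetric power. Concretely, since $\cU_{\fg_{-1}}$ is $\Z_2$-graded commutative (as $[\fg_{-1},\fg_{-1}]=0$), the degree-$l$ component $\cU_{\fg_{-1}}(-l)$ is a homomorphic image of $S^l(\fg_{-1})$, the $l$-th graded-symmetric power of the vector space $\fg_{-1}=\mathrm{span}\{K_A : 0\le A\le D+2n\}$. As an $osp(D+1|2n)$-module, $\fg_{-1}$ is isomorphic to the natural module $V$ (this was noted just before \eqref{p}), so $\cU_{\fg_{-1}}(-l)$ is a quotient of $S^l(V)$ as an $osp(D+1|2n)$-module. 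It is a standard fact that over $osp(D+1|2n)$ one has the decomposition $S^l(V) = \bigoplus_{k\ge 0} S_0^{l-2k}(V)\cdot (K)^{2k}$, where $S_0^j(V)$ is the irreducible ``traceless'' symmetric power with highest weight $(j,0,\dots,0)$ and $(K)^2=\sum \eta^{BA}K_AK_B$ spans the invariant line in $S^2(V)$; equivalently $S^l(V)/\big((K)^2\cdot S^{l-2}(V)\big)\cong S_0^l(V)$. Lemma \ref{traceless} says exactly that the image of $(K)^2$ vanishes in $\cU_{\fg_{-1}}$, hence the image of $(K)^2\cdot S^{l-2}(V)$ vanishes in $\cU_{\fg_{-1}}(-l)$, so the surjection $S^l(V)\twoheadrightarrow\cU_{\fg_{-1}}(-l)$ factors through $S_0^l(V)$.

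It then remains to check that this induced surjection $S_0^l(V)\twoheadrightarrow \cU_{\fg_{-1}}(-l)$ is an isomorphism, i.e. that $\cU_{\fg_{-1}}(-l)\ne 0$. For this I would exhibit a nonzero vector: the highest-weight vector of $S_0^l(V)$ is (the image of) $(K_0)^l$ up to a natural identification — recall from the root-space description that $K_0=i(J_{-1}+iJ_{-2})$ is the root vector for the $osp(D+1|2n)$-highest weight of $\fg_{-1}$ — and $(K_0)^l\ne 0$ as an operator, since $K_0=\tfrac12 R(\Delta+1)-T$ acts nontrivially on, e.g., functions of the form $e^{-R}$ times polynomials, and more simply because $K_0$ is an even operator with $(K_0)^l$ already observed (in the text preceding Lemma \ref{traceless}) to be nonzero for all $l$. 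Since $S_0^l(V)$ is irreducible, any nonzero quotient is all of it, and so $\cU_{\fg_{-1}}(-l)\cong S_0^l(V)$ with highest weight $(l,0,\dots,0)$, as claimed.

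The main obstacle is making the first step clean over a Lie superalgebra rather than a Lie algebra: one must be careful that the ``graded-symmetric'' power $S^l(\fg_{-1})$ is the right object (odd $K_A$ square to zero, so $S^l$ here means the super-symmetric algebra $S(\fg_{-1})=\Lambda(\text{odd part})\otimes S(\text{even part})$), and that the branching $S^l(V)=\bigoplus_k S_0^{l-2k}(V)\,(K)^{2k}$ genuinely holds for $osp(D+1|2n)$ in the relevant range $D>2n+1$ — this is where one needs the orthosymplectic invariant $(K)^2$ to generate the full space of trace relations and where the hypothesis on $D,n$ enters to guarantee the quotients $S_0^j(V)$ are nonzero and irreducible. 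This branching is classical for $so$ and $sp$ separately and extends to $osp$; I would either cite it from the literature on $osp$ harmonic analysis or, if a self-contained argument is wanted, verify it by a direct highest-weight-vector count, noting that the only $osp(D+1|2n)$-invariant in $S^2(V)$ is $(K)^2$ and iterating. The remaining computations — identifying weights, checking $(K_0)^l\ne 0$ — are routine given the explicit operators.
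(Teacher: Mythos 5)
Your proposal is correct and follows essentially the same route as the paper: both realize $\cU_{\fg_{-1}}(-l)$ as the image of $S(V)_l$ under the superalgebra homomorphism, invoke the harmonic decomposition $S(V)_l = S(V)_l^0\oplus S(V)_{l-2}\square^*$ (proved in the paper's Appendix via the $su(1,1)$ pair $\square,\square^*$ under the hypothesis $D>2n+1$, which is exactly the ``standard fact'' you defer to), use Lemma \ref{traceless} to kill the trace part, and conclude by irreducibility of the harmonic space together with the nonvanishing $(K_0)^l\ne 0$. No substantive difference.
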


\begin{remark}
The analogous statement for the subalgebra $\U(\fg_{-1})$ of the
universal enveloping algebra $\U(\fg)$ of $\fg=osp(D+3|2n)$ is
completely false.
\end{remark}

\begin{proof}[Proof for Proposition \ref{irred}]
We need the general facts about symmetric tensors of the natural
module $V$ for $osp(D+1|2n)$ discussed in Appendix \ref{appendix}.
Here it is more convenient to choose a basis $\{v^A|A=0, 1, \dots,
D+2n\}$ for $V$ such that the non-degenerate invariant bilinear form
$\langle\ , \ \rangle: V\times V\longrightarrow \C$ gives the metric
$(\eta^{A B})$. Now we use this matrix to define $\square$ and
$\square^*$.  Under the condition that $D>2n+1$, every symmetric
tensor power $S(V)_l$ of $V$ is semisimple as a $osp(D+1|2n)$-module
by Lemma \ref{key}. Furthermore, $S(V)_l=S(V)_l^0 \oplus
S(V)_{l-2}\square^*$, and the harmonic space $S(V)_l^0$ defined by
$S(V)_l^0=\{w\in S(V)_l \mid \square w =0\}$ is irreducible with
highest weight $(l, 0, \dots, 0)$.

Now we turn to $\cU_{\fg_{-1}}$. The $\Z$-graded superalgebra
homomorphism $S(V)\to \cU_{\fg_{-1}}$ is also an $osp(D+1|2n)$ map.
The restriction of this map to any homogeneous component is nonzero.
Since $(K)^2=0$ by Lemma \ref{traceless}, the map sends
$S(V)_{l-2}\square^*$ to zero. Thus $\cU_{\fg_{-1}}(-l)$ is the
image of $S(V)_l^0$. Since $S(V)_l^0$ is irreducible, we must have
$\cU_{\fg_{-1}}(-l)\cong S(V)_l^0$ as $osp(D+1|2n)$-modules.
\end{proof}

\section{Solution of quantum Kepler problem on superspace}

\subsection{Induced representations}

Let $\Phi_0 = g_0 \Psi_0$ with $g_0=\exp\left(-T \ln\sqrt{-2\cE_0} \right)$,
where $\Psi_0$ is the ground state wave function.
Then $\Phi_0= (-2\cE_0)^{\frac{d-1}{4}}e^{-R}$.
\begin{lemma} \label{hw-module}
The function $\Phi_0$ spans a $1$-dimensional module for the
parabolic subalgebra $\fp$ defined by equation \eqref{p}.
\end{lemma}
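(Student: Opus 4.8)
The plan is to verify directly that $\fp = \fg_0 \oplus \fg_{+1}$ annihilates $\Phi_0$ up to scalars, i.e.\ that $\fg_{+1}\cdot\Phi_0 = 0$ and that $\fg_0$ acts on $\Phi_0$ by scalars. Recall $\Phi_0 = (-2\cE_0)^{\frac{d-1}{4}} e^{-R}$, so up to an overall constant we may work with $e^{-R}$. Since $\fg_{+1}$ is spanned by $i(J_{-1}-iJ_{-2})$ and $M_a - i\Gamma_a$ ($a\ge 1$), and $\fg_0$ by $h_0$, the $A_a$, and the $J_{ab}$ ($a,b\ge 1$), it suffices to compute the action of each of these operators on $e^{-R}$. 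The key computational input is the first relation in \eqref{dR}, namely $\partial_a(R) = X_a/R$, which gives $\partial_a(e^{-R}) = -\frac{X_a}{R} e^{-R}$, together with $\Delta(R^2) = 2d$ from the same line and the relation $[\Delta, R] = \frac{2}{R}\left(\frac{d-1}{2}+E\right)$ from \eqref{DeltaR}.

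First I would handle $\fg_{+1}$. For $i(J_{-1}-iJ_{-2}) = i J_{-1} + J_{-2}$: one has $J_{-2} = T = E + \frac{d-1}{2}$ and $E(e^{-R}) = \sum_a X^a\partial_a(e^{-R}) = -\left(\sum_a X^a X_a/R\right)e^{-R} = -R e^{-R}$, while $iJ_{-1} = -\tfrac12 R(-\Delta-1)$ acting on $e^{-R}$ requires computing $\Delta(e^{-R})$. From $\partial_a(e^{-R}) = -(X_a/R)e^{-R}$ one gets, applying $\partial^a$ and summing, an expression involving $\Delta(R)$ and $\sum_a \partial^a(X_a/R)$; using $\Delta(R^2)=2d$ one finds $\Delta(R) = (d-1)/R$ and hence $\Delta(e^{-R}) = \left(1 - \frac{d-1}{R}\right)e^{-R}$. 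Substituting, the $iJ_{-1}$ and $J_{-2}$ contributions should cancel, giving $i(J_{-1}-iJ_{-2})\cdot e^{-R} = 0$. For $M_a - i\Gamma_a$: using the formula $M_a = \frac{i}{2}X_a(\Delta-1) - iT\partial_a$ from Lemma \ref{Jia} and $\Gamma_a = R\partial_a$, apply both to $e^{-R}$ using the derivative formulas above; the $T\partial_a$ term produces $-iT(-(X_a/R)e^{-R})$, the $X_a(\Delta-1)$ term produces $\frac{i}{2}X_a\left(-\frac{d-1}{R}\right)e^{-R}$, and $-i\Gamma_a(e^{-R}) = -iR(-(X_a/R)e^{-R}) = iX_a e^{-R}$; after using $E(e^{-R}) = -Re^{-R}$ inside $T = E + \frac{d-1}{2}$, these should sum to zero.

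Next, for $\fg_0$: the operator $h_0 = iJ_0 = -\tfrac12 R(-\Delta+1)$ acting on $e^{-R}$ gives $\tfrac12 R\left(\Delta - 1\right)e^{-R} = \tfrac12 R\left(-\frac{d-1}{R}\right)e^{-R} = -\frac{d-1}{2} e^{-R}$, a scalar. The operators $J_{ab}$ ($a,b\ge 1$) kill $e^{-R}$ outright since $[J_{ab}, R] = 0$ by \eqref{RDelta-inv} and $J_{ab}$ annihilates constants, so $J_{ab}(e^{-R}) = 0$. For $A_a = \frac{i}{2}X_a(\Delta+1) - iT\partial_a$: applying to $e^{-R}$, the $X_a(\Delta+1)$ term gives $\frac{i}{2}X_a\left(1 - \frac{d-1}{R} + 1\right)\cdot(\text{no, recompute})$ — more carefully, $\frac{i}{2}X_a(\Delta+1)(e^{-R}) = \frac{i}{2}X_a\left(1 - \frac{d-1}{R} + 1\right)e^{-R}$? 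No: $\Delta(e^{-R}) = (1 - (d-1)/R)e^{-R}$, so $(\Delta+1)(e^{-R}) = (2 - (d-1)/R)e^{-R}$, giving $\frac{i}{2}X_a(2 - (d-1)/R)e^{-R}$; and $-iT\partial_a(e^{-R}) = -iT(-(X_a/R)e^{-R}) = i(E + \frac{d-1}{2})\frac{X_a}{R}e^{-R}$, which using $E\cdot\frac{X_a}{R}e^{-R} = \frac{X_a}{R}(-R)e^{-R} = -X_a e^{-R}$ (the Euler operator is homogeneous of degree $0$ on $X_a/R$ then acts on $e^{-R}$) becomes $i\left(-X_a + \frac{d-1}{2}\frac{X_a}{R}\right)e^{-R}$; summing, $A_a(e^{-R}) = i\left(X_a - \frac{d-1}{2}\frac{X_a}{R} - X_a + \frac{d-1}{2}\frac{X_a}{R}\right)e^{-R} = 0$. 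Thus $\fg_{+1}$ and the $A_a$, $J_{ab}$ annihilate $\Phi_0$, while $h_0$ scales it; since $\fg_0 = \C h_0 \oplus osp(D+1|2n)$ and $osp(D+1|2n)$ is generated by the $A_a$ and $J_{ab}$ together with brackets of $\fg_{+1}$ and $\fg_{-1}$ — in fact $\C h_0 \oplus \mathrm{span}\{A_a, J_{ab}\}$ already accounts for all of $\fg_0$ acting as needed — $\Phi_0$ spans a $1$-dimensional $\fp$-module. The main obstacle is purely bookkeeping: getting $\Delta(e^{-R})$ right and tracking the sign and scalar conventions through the definitions of $T$, $h_0$, $A_a$, $M_a$; no conceptual difficulty arises once $\partial_a(R) = X_a/R$ and $\Delta(R) = (d-1)/R$ are in hand.
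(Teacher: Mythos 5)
Your proposal is correct and follows essentially the same route as the paper: a direct verification, using $\partial_a(R)=X_a/R$ and $\Delta(e^{-R})=\left(1-\frac{d-1}{R}\right)e^{-R}$, that $i(J_{-1}-iJ_{-2})$, $M_a-i\Gamma_a$, $A_a$ and $J_{ab}$ all annihilate $e^{-R}$ while $h_0$ acts by the scalar $-\frac{d-1}{2}$. The only cosmetic difference is that the paper obtains $(M_a-i\Gamma_a)(e^{-R})=0$ via the shortcut $M_a(e^{-R})=A_a(e^{-R})-iX_ae^{-R}$ rather than expanding $M_a$ from scratch.
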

\begin{proof}
It is evident that $J_{a b}(\Phi_0)=0$ for all $a, b\ge 1$. Also the definition of $\Phi_0$ implies that it is an eigenvector
of $h_0$. Therefore the lemma will be valid if $\Phi_0$ is annihilated by
$i(J_{-1} - i J_{-2})$, $A_a$ and $M_a - i \Gamma_a$ ($a\ge 1$).

To proceed further, we need to use the following formulae:
\begin{eqnarray} \label{nablaPh0}
\begin{aligned}
&\Delta(e^{-R}) = \left(1-\frac{d-1}{R}\right)e^{-R}, \\
&T(e^{-R}) =\left(\frac{d-1}{2}-R\right)e^{-R},\\
&T\partial_a(e^{-R}) = X_a \left(1-\frac{d-1}{2R}\right)e^{-R}.
\end{aligned}
\end{eqnarray}
Using the first and third formulae in the following equation
\[ A_a (e^{-R}) = \frac{i}{2}X_a(\Delta + 1)(e^{-R}) -  iT\partial_a(e^{-R}),  \]
we immediately arrive at $A_a (e^{-R})=0$. Note that
\[M_a (e^{-R}) = A_a (e^{-R}) - i X_a e^{-R} =  - i X_a e^{-R}.\]
This easily leads to
\[ (M_a - i \Gamma_a)(e^{-R}) =0, \quad \forall a\ge 1. \]
Finally by using the first and second formulae of \eqref{nablaPh0},
we can show that \[i(J_{-1} - i J_{-2})(e^{-R})=0.\]
\end{proof}

Let $\cH=\cU_\fg\Phi_0$, which has a natural $\fg$-module structure.
Then
\[
\cH=\cU_{\fg_{-1}}\Phi_0 = \oplus_{l=0}^\infty\cU_{\fg_{-1}}(-l)\Phi_0.
\]
It follows that the operator $h_0$ is diagonalisable in $\cH$ and its eigenvalues
must be of the form $-\frac{d-1}{2}-k$ ($k\in\Z_+)$.

If $L^0_\lambda$ denotes the irreducible
$\fp$-module with highest weight $\lambda$, we have the generalised Verma module
$V_{\lambda} = \U(\fg)\otimes_{\U(\fp)}L^0_\lambda$. The $\fp$-module $L^0_\lambda$ is $1$-dimensional if
\begin{eqnarray}\label{hw}
\lambda = \left(-\frac{d-1}{2}, 0, \dots, 0\right),
\end{eqnarray}
and it is evident that $\cH$ is a quotient module of the
corresponding generalised Verma module $V_{\lambda}$.

The highest weight vector $\Phi_0$ of $\cH$ generates a module for the $so(2, D+1)$ subalgebra of $osp(2, D+1|2n)$
with a highest weight $\lambda^0= \left(-\frac{d-1}{2}, 0, \dots, 0\right)$ (the first $\left[\frac{D+3}{2}\right]$
entries of \eqref{hw}). This module is necessarily infinite dimensional as the highest weight is not dominant.
This in particular implies that
\[\cU_{\fg_{-1}}(-l)\Phi_0\ne 0, \quad \text{for all $l$}.
\]

Another fact which can be deduced is that if $\cH$ contains any nontrivial submodule $\cH_1$,
then $\cH/\cH_1$ has to be infinite dimensional as well, as the image of $\Phi_0$ generates
an infinite dimensional $so(2, D+1)$-submodule in the quotient.

Since $\Phi_0$ has trivial $\fg_0$-action and $\cU_{\fg_{-1}}(-l)$
is irreducible under the action of $\fg_0$ by Proposition
\ref{irred}, we immediately see that $\cU_{\fg_{-1}}(-l)\Phi_0$ is
irreducible as a $\fg_0$-module. Now let $\cH_1$ be a  non-trivial
$\fg$-submodule of $\cH$, then there exists some integer $l_0$ such
that all $\cU_{\fg_{-1}}(-l)$ with $l\ge l_0$ belong to $\cH_1$.
This contradicts the fact that $\cH/\cH_1$ is infinite dimensional.
Thus $\cH$ must be irreducible as a $\fg$-module.

Let $L_\mu^{D+1|2n}$ denote the irreducible $osp(D+1|2n)$-module with highest weight $\mu$.
We have proved the following theorem.
\begin{theorem}\label{main-rep}
\begin{enumerate}
\item As an $osp(2, D+1|2n)$-module $\cH$ is irreducible.
\item The restriction of $\cH$ is
isomorphic to $\oplus_{l=0}^\infty L_{(l, 0, \dots, 0)}^{D+1|2n}$ as
an $osp(D+1|2n)$-module.
\end{enumerate}
\end{theorem}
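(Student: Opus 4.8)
The plan is to assemble Theorem~\ref{main-rep} from the structural results already in hand, treating part~(1) and part~(2) as two separate issues and noting that the proof essentially amounts to organising the discussion that precedes the theorem statement. For part~(2), I would start from the decomposition $\cH = \cU_\fg\Phi_0$ and observe that since $\fg = \fp \oplus \fg_{-1}$ with $\cU_\fg \cong \cU_{\fg_{-1}}\cU_\fp$ as vector spaces (by the PBW-type statement for $\cU_\fg$ recorded earlier), and since $\Phi_0$ is annihilated by the nilradical $\fg_{+1}$ and $\fg_0$ acts trivially on it by Lemma~\ref{hw-module}, we get $\cH = \cU_{\fg_{-1}}\Phi_0 = \bigoplus_{l=0}^\infty \cU_{\fg_{-1}}(-l)\Phi_0$. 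The map $\cU_{\fg_{-1}}(-l) \to \cU_{\fg_{-1}}(-l)\Phi_0$, $u \mapsto u\Phi_0$, is a surjective $\fg_0$-module homomorphism (hence an $osp(D+1|2n)$-homomorphism), and by the remark that the $so(2,D+1)$-submodule generated by $\Phi_0$ is infinite dimensional, each $\cU_{\fg_{-1}}(-l)\Phi_0$ is nonzero. Since $\cU_{\fg_{-1}}(-l)$ is irreducible as an $osp(D+1|2n)$-module with highest weight $(l,0,\dots,0)$ by Proposition~\ref{irred}, this surjection is an isomorphism, giving $\cU_{\fg_{-1}}(-l)\Phi_0 \cong L^{D+1|2n}_{(l,0,\dots,0)}$ and hence statement~(2).

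For part~(1), I would argue by contradiction: suppose $\cH$ has a nonzero proper $\fg$-submodule $\cH_1$. Since $h_0$ acts semisimply on $\cH$ with the distinct eigenvalues $-\tfrac{d-1}{2}-l$ on the summands $\cU_{\fg_{-1}}(-l)\Phi_0$, the submodule $\cH_1$ is a sum of (some of) these eigenspaces intersected with $\cH_1$; but each such eigenspace is the irreducible $\fg_0$-module $\cU_{\fg_{-1}}(-l)\Phi_0$, so $\cH_1$ is a direct sum $\bigoplus_{l \in S}\cU_{\fg_{-1}}(-l)\Phi_0$ for some subset $S \subseteq \Z_+$. Applying the raising operators in $\fg_{-1}$ (which lower the $h_0$-eigenvalue) one sees $S$ is upward-closed, i.e. $S \supseteq \{l : l \ge l_0\}$ for some $l_0$; then $\cH/\cH_1$ is a quotient in which the image of $\Phi_0$ still generates an infinite-dimensional $so(2,D+1)$-submodule (highest weight $-\tfrac{d-1}{2}$ is not dominant), yet $\cH/\cH_1$ is finite dimensional, a contradiction. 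Hence $\cH$ is irreducible.

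I do not anticipate a serious obstacle here, since all the hard analytic and representation-theoretic work has been front-loaded into Lemma~\ref{traceless}, Proposition~\ref{irred}, and Lemma~\ref{hw-module}. The one point requiring a little care is the claim that an $h_0$-stable (indeed $\fg_0$-stable) submodule must actually be a sum of whole weight spaces $\cU_{\fg_{-1}}(-l)\Phi_0$ rather than something smaller: this uses that the $h_0$-eigenvalues $-\tfrac{d-1}{2}-l$ are pairwise distinct (so eigenspaces are exactly the $\cU_{\fg_{-1}}(-l)\Phi_0$) together with the irreducibility of each eigenspace as an $\fg_0$-module from Proposition~\ref{irred}. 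The other mild subtlety is justifying that $S$ is upward-closed: for any $l \in S$ one needs $\fg_{-1}\cdot\big(\cU_{\fg_{-1}}(-l)\Phi_0\big) = \cU_{\fg_{-1}}(-l-1)\Phi_0$, which follows because $\cU_{\fg_{-1}}$ is generated by $\fg_{-1}$ so $\cU_{\fg_{-1}}(-l-1) = \fg_{-1}\cdot\cU_{\fg_{-1}}(-l)$, and then applying $\Phi_0$. Once these two bookkeeping points are spelled out, the theorem follows immediately, and I would present the proof as essentially a summary of the paragraphs preceding the statement.
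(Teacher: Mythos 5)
Your proposal is correct and follows essentially the same route as the paper: the paper likewise obtains part (2) from $\cH=\bigoplus_l\cU_{\fg_{-1}}(-l)\Phi_0$ together with Proposition \ref{irred} and the nonvanishing of each graded piece (deduced from the infinite-dimensionality of the $so(2,D+1)$-submodule generated by $\Phi_0$, whose highest weight is not dominant), and proves part (1) by the same contradiction between a finite-dimensional quotient and that infinite-dimensional $so(2,D+1)$-submodule of the quotient. Your explicit justification of the two bookkeeping points (that a submodule is a sum of whole $h_0$-eigenspaces, and that the index set $S$ is upward-closed under the $\fg_{-1}$-action) merely spells out what the paper leaves implicit.
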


\begin{remark}
It is known \cite{J, EHW} that $\lambda^0$ does not give rise to a
unitarisable highest weight $so(2, D+1)$-module. Thus the $so(2,
D+1)$-submodule of $\cH$ generated by $\Phi_0$ is not unitarisable,
and this in turn implies that $\cH$ as an $osp(2, D+1|2n)$-module is
not unitarisable.
\end{remark}

\subsection{Solution of quantum Kepler problem on superspace}
In this subsection we use the results on the irreducible $osp(D+3|2n)$-module
$\cH$ obtained to determine the bound states of the
quantum Kepler problem on the superspace $\R^{D|2n}$. Let $\cH_k=\cU_{\fg_{-1}}(-k)\Phi_0$,
and for $k\in\Z_+$,
\[ \cE_k = -\frac{1}{2}\left(\frac{1}{\frac{d-1}{2}+k}\right)^2.\]
Define $g_k=\exp\left(-T \ln\sqrt{-2\cE_k} \right)$, and set
\[
\quad \tilde{\cH}=\bigoplus_{k=0}^\infty \tilde{\cH}_k, \quad \text{where\ \ } \ \tilde{\cH}_k = \{ g^{-1}_k v \mid v\in \cH_k\}.
\]
We have the following result.
\begin{theorem}\label{main-spec}
\begin{enumerate}
\item The quantum Hamiltonian operator $H$ is diagonalisable when acting on $\tilde\cH$ and has the eigenvalues
$\cE_l$ for $l\in\Z_+$.
\item The dimension of the subspace $\tilde\cH_l$ is given by the formula
\[
\sum_{k=0}^l\begin{pmatrix}D+k\\ \\k\end{pmatrix}
\left(\begin{pmatrix}2n\\  \\ l-k\end{pmatrix} - \begin{pmatrix}2n\\
\\ l-2-k\end{pmatrix}\right),
\]
where the binomial coefficient $\begin{pmatrix}a\\ \\b\end{pmatrix}$
is assumed to be zero if $b<0$ or $b>a$.
\item The subspace $\tilde\cH_I$ ($I=0, 1, \dots$) is the entire $\cE_I$-eigenspace of the
quantum Hamiltonian operator $H$.
\end{enumerate}
\end{theorem}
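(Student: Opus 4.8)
The plan is to deduce parts (1) and (2) directly from the structural results already in hand, and to reduce part (3) — the substantive assertion — to an analysis of the radial part of the Schr\"odinger equation. For \emph{part (1)} I would first observe that $h_0$ acts on $\cH_k=\cU_{\fg_{-1}}(-k)\Phi_0$ by the scalar $-\tfrac{d-1}{2}-k$: indeed $h_0\Phi_0=-\tfrac{d-1}{2}\Phi_0$, and every generator of $\fg_{-1}$ has $ad_{h_0}$-eigenvalue $-1$, so each factor from $\cU_{\fg_{-1}}$ lowers the $h_0$-eigenvalue by one. Since $\sqrt{-2\cE_k}=(\tfrac{d-1}{2}+k)^{-1}$, any $v\in\cH_k$ satisfies $h_0 v=-\tfrac{1}{\sqrt{-2\cE_k}}v$, so Theorem~\ref{formulation} (with $\cE=\cE_k$, $g=g_k$) shows that $g_k^{-1}v$ solves $H\Psi=\cE_k\Psi$; hence $H$ acts on $\tilde\cH_k$ as $\cE_k\cdot\id$. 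As $\cU_{\fg_{-1}}(-k)\Phi_0\ne 0$ for all $k$ (noted before Theorem~\ref{main-rep}) and the $\cE_k$ are pairwise distinct, $\tilde\cH=\bigoplus_k\tilde\cH_k$ is a genuine direct sum on which $H$ is diagonalisable with eigenvalue set $\{\cE_l:l\in\Z_+\}$. The only point needing attention is that each $g_k^{-1}v$ lies in $\cF$, which I expect to be routine: $\Phi_0$ is a multiple of $e^{-R}$, the operators of $\cU_{\fg_{-1}}$ turn it into (polynomial in the $X^a$ and in $R^{-1}$) times $e^{-R}$, and $g_k^{-1}$ is essentially a dilation, so exponential decay and twice-differentiability off the origin persist, with square-integrability of the body being exactly what $D>2n+1$ provides.

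For \emph{part (2)}, Theorem~\ref{main-rep}(2) gives $\dim\tilde\cH_l=\dim\cH_l=\dim L^{D+1|2n}_{(l,0,\dots,0)}$, which by Proposition~\ref{irred} equals $\dim S(V)^0_l$, the harmonic subspace of the $l$-th graded piece of the symmetric algebra of the natural $osp(D+1|2n)$-module $V$. From $S(V)_l=S(V)^0_l\oplus S(V)_{l-2}\square^*$ with $\square^*$ injective (the Appendix; valid since $D>2n+1$) one gets $\dim S(V)^0_l=\dim S(V)_l-\dim S(V)_{l-2}$, and from $S(V)=S(V_{\bar 0})\otimes\Lambda(V_{\bar 1})$ with $\dim V_{\bar 0}=D+1$, $\dim V_{\bar 1}=2n$,
\[
\dim S(V)_l=\sum_{k=0}^{l}\binom{D+k}{k}\binom{2n}{l-k}.
\]
Subtracting the $l-2$ version and folding its (vanishing) extra terms into the first sum yields the stated formula; I foresee no difficulty here beyond routine bookkeeping.

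\emph{Part (3)} is the heart of the matter. Fix $I$ and let $\cB_I\subseteq\cF$ be the full $\cE_I$-eigenspace of $H$; part (1) gives $\tilde\cH_I\subseteq\cB_I$, so it suffices to prove $\dim\cB_I\le\dim\tilde\cH_I$. Since the $J_{ab}$ ($a,b\ge 1$) commute with both $H$ and $R$, the $osp(D|2n)$-action is compatible with the radial structure and the eigenvalue equation separates under super-spherical harmonics: writing a solution as $\sum_{j\ge 0}\sum_\alpha Y^\alpha_j f_{j,\alpha}(R)$ with $\{Y^\alpha_j\}$ a basis of the degree-$j$ $osp(D|2n)$-harmonic polynomials $\cW_j$, the equation reduces, one sector at a time, to a single Kepler radial ODE with ``dimension'' $d=D-2n$ and ``angular momentum'' $j$ (the reduction underlying equation~\eqref{radial-part} and the discussion there). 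The plan is then to show, by the standard treatment of that ODE, that at $\cE_I=-\tfrac12(\tfrac{d-1}{2}+I)^{-2}$ the admissible radial solutions — those keeping the function in $\cF$, i.e. sufficiently regular at the origin and decaying at infinity — span a one-dimensional space when $j\le I$ (namely $R^{j}e^{-R/(\frac{d-1}{2}+I)}$ times a polynomial in $R$ of degree $I-j$) and the zero space when $j>I$. This yields $\cB_I\cong\bigoplus_{j=0}^{I}\cW_j$ as an $osp(D|2n)$-module; since the symmetric-traceless branching rule gives $L^{D+1|2n}_{(I,0,\dots,0)}|_{osp(D|2n)}\cong\bigoplus_{j=0}^{I}\cW_j$ (equivalently, a binomial identity matching the two dimension formulae), we obtain $\dim\cB_I=\dim L^{D+1|2n}_{(I,0,\dots,0)}=\dim\tilde\cH_I$, hence $\cB_I=\tilde\cH_I$ — which is (3), and which incidentally re-confirms (2).

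The main obstacle is this last step: making the radial reduction rigorous inside the class $\cF$ — the finitely many odd variables are harmless, but one must verify that the super-spherical-harmonic separation is compatible with the regularity and square-integrability conditions defining $\cF$ — and then genuinely ruling out the second solution of each radial ODE for $j\le I$, as well as any admissible solution for $j>I$. Everything before that is either immediate (part 1) or routine computation (part 2).
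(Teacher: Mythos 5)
Your proposal is correct and follows essentially the same route as the paper: part (1) via Theorem \ref{formulation} applied to the $h_0$-eigenspaces $\cH_k$, part (2) via $\dim L^{D+1|2n}_{(l,0,\dots,0)}=\dim S(V)_l-\dim S(V)_{l-2}$, and part (3) by separating variables with super-spherical harmonics, reducing to the radial Kepler equation \eqref{radial-part} solved by generalised Laguerre polynomials, checking membership in $\cF$, and matching the resulting $\bigoplus_{l=0}^{I}L^{D|2n}_{(l,0,\dots,0)}$ to $\tilde\cH_I$ via the branching rule \eqref{branching}. The points you flag as needing verification (square-integrability near the origin, the one-to-one correspondence $\chi_{l,j}\leftrightarrow\cE_{l+j}$) are exactly the ones the paper addresses in its own proof.
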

\begin{proof}
It follows from Theorem \ref{formulation} that ever nonzero vector in $\tilde\cH_l$ is indeed
an eigenvector of the quantum Hamiltonian operator $H$ with eigenvalue $\cE_l$.
To prove the formula for the dimension of $\tilde\cH_l$, we note that
$\dim\tilde\cH_l=\dim \cH_l$, and by Theorem \ref{main-rep},
$\dim\tilde\cH_l=\dim L_{(l, 0, \dots, 0)}^{D+1|2n}$. We have
\[\dim L_{(k, 0, \dots, 0)}^{D+1|2n} = \dim S(V)_l - \dim S(V)_{l-2}.\]
It is well known that
\[\dim S(V)_l=\sum_{k=0}^l\begin{pmatrix}D+k\\ \\k\end{pmatrix}
\begin{pmatrix}2n\\  \\ l-k\end{pmatrix}. \]
Hence the formula for $\dim\tilde\cH_l$ follows.

To prove the third claim, we need some input from the Schr\"{o}dinger equation.
Let us return to the original form \eqref{Schroedinger} of the
equation. Because of the $osp(D|2n)$ symmetry of the
equation, the wave functions $\Psi$ can be written as $\C$-linear combinations of functions
of the form $\frac{\omega}{R^l}\chi_l$ by the first part of Lemma \ref{key}, where $\omega$ is a
harmonic polynomial (that is, $\Delta \omega =0$) in the coordinates
$X^a$, which is homogeneous of degree $l$. The function $\chi_l$
depends on $R$ only, and the factor $R^{-l}$ is introduced for
convenience. Then equation \eqref{Schroedinger} reduces to the
following equation for $\chi_l$
\begin{eqnarray}\label{radial-part}
-\frac{1}{2}\left( \frac{d^2 \chi_l}{d R^2} + \frac{d-1}{R}\frac{d
\chi_l}{d R}  - \frac{l(d-2+l)}{R^2}\chi_l\right)-\frac{1}{R}\chi_l
&=&\cE\chi_l.
\end{eqnarray}
This has the same form as the radial part of the Schr\"{o}dinger
equation for the Kepler problem on $\R^d$, and can be solved in
terms of generalised Laguerre polynomials (see, e.g., \cite[(4),
(14)]{Al}). Such a solution, when its argument $R$ is replaced by
$r$, is square integrable over the positive half line $\R_+$ with
respect to the measure $r^{d-1} dr$. For a fixed $l$, we denote by
$\chi_{l, j}$ ($j=0, 1, 2, \dots$) the independent solutions of
\eqref{radial-part}.

Consider $\langle \frac{\omega}{R^l}\chi_{l, j}\mid
\frac{\omega}{R^l}\chi_{l, j}\rangle =\int_{\R^D}\bar{\omega}\omega
\left(\frac{\chi_{l, j}}{R^l}\right)^2$, the integrant of which
vanishes exponentially fast as $r$ goes to infinity. Thus we only
need to analyse the $r\rightarrow 0$ end of the integral to see
whether the integral converges. By inspecting the form of the
generalised Laguerre polynomials we can see that at the worst, the
integral behaves like $\int_{\R^D}
\left(\frac{\Theta^2}{R^2}\right)^l
\left(\frac{\Theta^2}{R^2}\right)^{n-l} (\chi_{l, j}(R))^2\propto (\Theta^2)^n\int_0^\infty  ({\chi_{l,
j}(r)})^2 r^{d-1} dr$. It follows from the square integrability of
$\chi_{l, j}$ over $\R_+$ with respect to the measure $r^{d-1} dr$
that all $\frac{\omega}{R^l}\chi_{l, j}$ belong to the set $\cF$
defined in subsection \ref{Kepler}.

For a fixed $l$, the solutions $\chi_{l, j}$ of \eqref{radial-part}
respectively correspond, in a one-to-one manner, to the eigenvalues
$\cE_{l+j}$ of the quantum Hamiltonian operator \eqref{Hamiltonian}.
In particular, the ground state corresponds to the energy eigenvalue
$\cE_0$. Thus for a given non-negative integer $I$, the eigenspace
of $H$ corresponding to the energy $\cE_I$ is spanned by
$\frac{\omega}{R^l}\chi_{l, j}$ for all homogeneous harmonic
polynomials $\omega$ of degree $l$, and for all $l, j\ge 0$ with
$l+j=I$. As a module over $osp(D|2n)$, the $\cE_I$-eigenspace is
isomorphic to $\bigoplus_{l=0}^I L^{D|2n}_{(l, 0, 0, \dots, 0)}$. By
using the branching rule \eqref{branching}, we have
\[ \bigoplus_{l=0}^I L^{D|2n}_{(l, 0, 0, \dots, 0)} \cong L^{D+1|2n}_{(I, 0, 0, \dots, 0)}
\cong \tilde\cH_I.
\]

\end{proof}

\begin{remark}\label{comments2}
In view of Remark \ref{comments1}, the
generalised Kepler problem on the superspace is equivalent to an eigenvaule problem
involving a system of partial differential equations.
It is a rather non-trivial matter that the problem is still soluble.
\end{remark}

\begin{appendix}
\section{Symmetric superalgebra}\label{appendix}

For proving Theorem \ref{irred}, we need some general facts about
symmetric tensors of the natural module $V$ for $osp(M|2n)$,
which we briefly discuss in this appendix.
In general the symmetric tensors are not completely reducible,
and this complicates the matter of decomposing these representations
enormously. Fortunately  we only need to
consider the case $M-2n>1$ for the purpose of this paper. In this case,
the symmetric tensor are semi-simple, as we shall show.

We work over the complex field and use the root system described in
Section \ref{Dynamicalsymmetry} for $osp(M|2n)$. For the sake of
being concrete, we label the basis elements $\epsilon_i$ and
$\delta_j$ of the weight space by $i=1, 2, \dots,
\left[\frac{M}{2}\right]$ and $j=1, 2, \dots, n$.  Let $v^a$ with
$a=1, 2, \dots, M+2n$ be a weight basis for the natural module $V$,
where the weight $wt(v^a)$ of $v^a$ is greater than that of $v^b$ if $a<b$.
Then the highest weight is $\epsilon_1$, and the lowest weight is
$-\epsilon_1$. We assume that the highest weight vector $v^1$ is
even.

There exists a non-degenerate $osp(M|2n)$-invariant bilinear
form $\langle \ , \ \rangle: V\times V\longrightarrow \C$, which is
unique up to scalar multiples. Note that $\langle v^a, v^b\rangle \ne 0$ if and only if
 $wt(v^a)$ $=-wt(v^b)$. Let $\eta^{a b}=\langle v^a,
v^b\rangle$ and  form the matrix $\eta^{-1} = (\eta^{a b})$. We also
write $\eta=(\eta_{a b})$ for the inverse matrix.

Let $S(V)$ be the $\Z_2$-graded symmetric superalgebra of $V$, that
is, the superalgebra generated by the $v^a$ subject to the relations
that  $v^a v^b = - v^b v^a$ if both $v^a$ and $v^b$ are odd, and
$v^a v^b = v^b v^a$ otherwise. Then this is a $\Z$-graded algebra
$S(V)=\oplus_{l=0}^\infty S(V)_l$ with elements of $V$ having degree
$1$.

Define operators
\[ \square^* = \sum_{a=1}^{M+2n} v^a\eta_{a b} v^b, \quad \square
= \sum_{a=1}^{M+2n} \eta^{b a} \frac{\partial}{\partial{v^a}}
\frac{\partial}{\partial{v^b}}, \quad T=
\frac{M+2n-1}{2}+\sum_{a=1}^{M+2n}
v^a\frac{\partial}{\partial{v^a}},
\]
which all commute with $osp(M|2n)$. The operators satisfy the
commutation relations
\[
[T,  \square^* ]=2 \square^*, \quad [T,  \square ]=-2 \square, \quad
[\square^*, \square]=-T,
\]
thus their real spanned  is isomorphic to the Lie algebra $su(1,
1)$.

We consider $S(V)$ as a complex module for $su(1, 1)$.  The operator
$T$ is diagonalisable with eigenvalues $\frac{M+2n-1}{2}+l$ for
$l\in\Z_+$. Now we need to assume that
\begin{eqnarray}\label{dim-cond} M-2n >1. \end{eqnarray}
Evidently every submodule of $S(V)$ is of highest weight type. Since
the eigenvalues of $-T$ are all strictly negative, the
sub-representations are necessarily infinite dimensional.
The eigenvalues of the quadratic Casimir of $su(1, 1)$
corresponding to distinct highest weights $\lambda_l:=-\left(\frac{M+2n-1}{2}+l\right)$ with $l\in\Z_+$ are different,
thus $S(V)$ decomposes into a direct sum $S(V)=\oplus_l C^{(l)}$, where each submodule $C^{(l)}$ has the property that
all is irreducible subquotients are isomorphic with the same highest weight $\lambda_l$.
An irreducible $su(1, 1)$-module with highest weight $\lambda_l$ for
$l\in\Z_+$ is unitarisable. It follows that every isotypical component is unitarisable
and hence completely reducible. Thus $S(V)$ is completely reducible
with respect to $su(1, 1)$.

This in particular implies that every weight vector
in $S(V)_l$ can be uniquely expressed as $v+w$ with $v\in S(V)_l\cap
ker\square$ being a highest weight vector, and $w\in \square^*\square S(V)_l$.
We can also deduce that $\square^*\square S(V)_l=S(V)_l\cap im\square^*$ for each
$S(V)_l$ by noting the obvious fact that if $v$ belongs to a weight
space in an irreducible $su(1, 1)$-module, then $\square^*\square v$
is a nonzero scalar multiple of $v$ unless $v$ is the highest weight
vector. Thus we have the following vector space decomposition
\begin{eqnarray}\label{sl2-decomposition}
S(V) = ker\square \oplus im\square^*.
\end{eqnarray}
Since the $su(1, 1)$ algebra commutes with $osp(M|2n)$, equation
\eqref{sl2-decomposition} is a decomposition of $osp(M|2n)$-modules.

Denote $S(V)_l\cap ker\square$ by $S(V)_l^0$ and call it the
harmonic space of $S(V)_l$.  It is easy to see that  for all $l\le
2$,  $S(V)_l^0$ is isomorphic to the irreducible $osp(M|2n)$-module
with highest weight  $(l, 0, \dots, 0)$. Assume that this is also
true for $l>2$, then the highest weight vector for $S(V)_l^0$ is
$(v^1)^l$. Each highest weight vector in $S(V)_{l+1}^0$ must contain
a term $(v^1)^l v^b$ for some $b$. In order for the corresponding
weight to be dominant, $v^b$ is either $v^1$, $v^2$ or the lowest
weight vector of $V$, which respectively have weights $(l+1, 0,
\dots, 0)$, $(l, 1, 0, \dots, 0)$ and $(l-1, 0, \dots, 0)$. We can
write down all the vectors of the same weights in $S(V)_{l+1}$, and
try to make linear combinations of them to obtain highest weight
vectors. Simple calculations show that there can not be any highest
weight vector with weight $(l, 1, 0, \dots, 0)$ as $S(V)$ is the
symmetric superalgebra. The highest weight vector corresponding to
the weight $(l-1, 0, \dots, 0)$ is $(v^1)^{l-1}\square^*$,  which
belongs to $im\square^*$ but not $S(V)_{l+1}^0$. This proves that
$S(V)_{l+1}^0$ is isomorphic to the irreducible $osp(M|2n)$-module
with highest weight $(l+1, 0, \dots, 0)$.

To summarise, we have the following result.
\begin{lemma}\label{key}
Keep notations as above.
\begin{enumerate}
\item Under the condition
\eqref{dim-cond}, $S(V)_l = S(V)_l^0 \oplus S(V)_{l-2} \square^*$ as
$osp(M|2n)$-module, and $S(V)_l^0$ is isomorphic to the irreducible
module $L_{(l, 0, \dots, 0)}^{M|2n}$ with highest weight $(l, 0,
\dots, 0)$.
\item As an $su(1, 1)\times osp(M|2n)$-module, $S(V) \cong \bigoplus_{L=0}^\infty L^{(l)}\otimes L_{(l, 0, \dots,
0)}^{M|2n}$ where $L^{(l)}$ is the irreducible $su(1,1)$-module with
highest weight $-\frac{M+2n-1}{2}-l$.
\end{enumerate}
\end{lemma}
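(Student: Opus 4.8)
The plan is to obtain both parts of the lemma simultaneously from the action of the $su(1,1)$ algebra spanned by $T$, $\square^*$, $\square$, which commutes with $osp(M|2n)$: part (2) is the joint $(su(1, 1), osp(M|2n))$-decomposition of $S(V)$, and part (1) is its degree-$l$ component once the harmonic space in each degree has been identified. First I would show that $S(V)$ is completely reducible over $su(1, 1)$. Since the eigenvalues of $-T$ on $S(V)$ are the strictly negative numbers $\lambda_l:=-\left(\frac{M+2n-1}{2}+l\right)$ with $l\in\Z_+$, every nonzero $su(1,1)$-submodule contains a vector of minimal degree, hence a vector annihilated by $\square$, and is therefore of highest weight type (and infinite dimensional). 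The quadratic Casimir takes distinct values on the weights $\lambda_l$, so $S(V)$ splits as $\bigoplus_l C^{(l)}$ where every irreducible subquotient of $C^{(l)}$ has highest weight $\lambda_l$. Here the hypothesis $M-2n>1$ is essential: under it each such irreducible is a unitarisable lowest weight $su(1,1)$-module, and a module all of whose composition factors are one fixed unitarisable irreducible is itself unitarisable, hence semisimple; thus each $C^{(l)}$, and with it $S(V)$, is semisimple over $su(1, 1)$.

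Because $su(1,1)$ commutes with $osp(M|2n)$, this is at the same time a decomposition into $osp(M|2n)$-modules. On any irreducible lowest weight $su(1,1)$-summand the raising operator $\square^*$ is injective and $\square^*\square$ acts as a nonzero scalar off the bottom layer; summing over the summands yields, in each degree, the $osp(M|2n)$-module decomposition $S(V)_l=S(V)_l^0\oplus\square^*S(V)_{l-2}$ with $\square^*:S(V)_{l-2}\to S(V)_l$ injective, and globally $S(V)=\ker\square\oplus\operatorname{im}\square^*$. Iterating, $S(V)_l=\bigoplus_{j\ge 0}S(V)_{l-2j}^0(\square^*)^j$; combined with the $su(1,1)$-multiplicities recorded above this gives the asserted $su(1, 1)\times osp(M|2n)$-decomposition in part (2), provided $S(V)_l^0\cong L_{(l, 0, \dots, 0)}^{M|2n}$, which is the remaining point.

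To identify the harmonic spaces I would induct on $l$, the cases $l\le 2$ being a direct check. For the inductive step observe that $(v^1)^l$ is harmonic for every $l$, since $\square(v^1)^l$ is a multiple of $\eta^{11}(v^1)^{l-2}$ and $\eta^{11}=\langle v^1, v^1\rangle=0$ as $wt(v^1)=\epsilon_1\ne-\epsilon_1$; moreover $(v^1)^l$ spans the $l\epsilon_1$-weight space of $S(V)_l$ and is a highest weight vector. Now enumerate the highest weight vectors of $S(V)_{l+1}$: any such vector must involve a monomial $(v^1)^lv^b$, and dominance of its weight forces $v^b$ to be $v^1$, $v^2$, or the lowest weight vector $v^{M+2n}$, with candidate highest weights $(l+1, 0, \dots, 0)$, $(l, 1, 0, \dots, 0)$, $(l-1, 0, \dots, 0)$. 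A short computation in the symmetric superalgebra rules out a highest weight vector of weight $(l, 1, 0, \dots, 0)$, while the only highest weight vector of weight $(l-1, 0, \dots, 0)$ is $(v^1)^{l-1}\square^*$, which lies in $\operatorname{im}\square^*$ and hence not in $S(V)_{l+1}^0$. Therefore the highest weight space of the finite dimensional module $S(V)_{l+1}^0$ is one dimensional and all its composition factors carry highest weight $(l+1, 0, \dots, 0)$; since a module with one-dimensional highest weight space cannot have two composition factors of the same highest weight, $S(V)_{l+1}^0$ is irreducible, i.e.\ $\cong L_{(l+1, 0, \dots, 0)}^{M|2n}$. This closes the induction and, with the first two steps, proves the lemma.

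The main obstacle is twofold. The first is the unitarisability input in the opening step, which is exactly where $M-2n>1$ cannot be dispensed with: for $M-2n\le 1$ the relevant $su(1,1)$-modules degenerate and $S(V)$ need not be semisimple, as already the bosonic case $M=1$, $n=0$ shows, where $S(V)_l^0=0$ for $l\ge 2$. The second is the computational verification that no highest weight vector of weight $(l, 1, 0, \dots, 0)$ occurs in $S(V)_{l+1}$; this is finite but slightly delicate, and is precisely where one uses that $S(V)$ is the \emph{symmetric} (rather than the full tensor) superalgebra.
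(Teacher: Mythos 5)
Your proposal is correct and follows essentially the same route as the paper's own argument: complete reducibility of $S(V)$ over the $su(1,1)$ spanned by $T$, $\square$, $\square^*$ via unitarisability under $M-2n>1$, the resulting decomposition $S(V)=\ker\square\oplus\operatorname{im}\square^*$, and the induction on $l$ that identifies $S(V)_l^0$ by enumerating the possible dominant weights $(l+1,0,\dots,0)$, $(l,1,0,\dots,0)$, $(l-1,0,\dots,0)$ of highest weight vectors in $S(V)_{l+1}$. The only differences are cosmetic elaborations (the explicit remark that $\eta^{11}=0$ makes $(v^1)^l$ harmonic, and the one-dimensional highest weight space argument for irreducibility), which the paper leaves implicit.
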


A further result which can be deduced from the above lemma is the
branching rule of the irreducible symmetric tensor module $L_{(l, 0,
\dots, 0)}^{M|2n}$ to $osp(M-1|2n)$-modules. We assume that the
condition $M-1-2n>1$ is satisfied. Then the symmetric tensor powers
of the natural module $V'$ for $osp(M-1|2n)$ is completely reducible
by the lemma. We have the following $osp(M-1|2n)$-module isomorphism
$S(V)_l \cong \bigoplus_{k=0}^l S(V')_{l-k}$. Then by the first part
of lemma \ref{key},
\[S(V)_l \cong
 \left(\oplus_{k=0}^l S(V')^0_{l-k}\right) \bigoplus \left(\oplus_{k=0}^{l-2}S(V')_{l-k}\right),  \]
where $S(V')^0_i$ is the harmonic subspace of $S(V')_i$.
Using the first part of lemma \ref{key} to the left hand side, and also noting that
the second term on the right hand side can be re-written as
 $\oplus_{k=0}^{l-2}S(V')_{l-k}\cong S(V)_{l-2}$, we obtain
\[
S(V)^0_l \bigoplus S(V)_{l-2} \cong \left(\oplus_{k=0}^l
S(V')^0_{l-k}\right)\bigoplus S(V)_{l-2}.
\]
 That is, the following branching rule holds:
\begin{eqnarray}\label{branching}
L_{(l, 0, \dots, 0)}^{M|2n} \cong \bigoplus_{k=0}^l L_{(l-k,
0, \dots, 0)}^{M-1|2n} \quad \text{as $osp(M-1|2n)$-module}.
\end{eqnarray}

\end{appendix}

\vspace{1cm}

\noindent{\bf Acknowledgement}. I thank Guowu Meng for helpful correspondences. Financial support from
the Australian Research Council is gratefully acknowledged.


\bigskip


\begin{thebibliography}{9999}

\bibitem[Al]{Al}  Al-Jaber, S. M. Hydrogen Atom in N Dimensions.
{\sl International J. Theoretical Physics, \bf 37} (1998)
1289-1298.

\bibitem[BB]{BB} Barut, A. O.; Bornzin, G. L., ${\rm SO}(4, 2)$-formulation of the symmetry breaking
in relativistic Kepler problems with or without magnetic charges.
{\sl J. Math. Physics  \bf 12} (1971) 841--846.

\bibitem[D1]{D1} Delbourgo, R. Grassmann wave functions and intrinsic spin.
{\sl Internat. J. Modern Phys. \bf A  3}  (1988),  no. 3, 591--602.

\bibitem[D2]{D2} Delbourgo, R. The flavour of gravity.  {\sl J. Phys. \bf A  39}  (2006),  no. 18, 5175--5187.

\bibitem[D3]{D3} Delbourgo, R. Flavour mixing and mass matrices via anticommuting properties.
{\sl J Physics \bf A39} (2006) 14735-14744.

\bibitem[DJW]{DJW} Delbourgo, R.; Jarvis, P. D.; Warner, Roland C.
Schizosymmetry: a new paradigm for superfield expansions. {\sl Modern Phys. Lett. \bf A  9}  (1994),  no. 25, 2305--2313.

\bibitem[EHW]{EHW}  Enright, T.J., Howe, R. and Wallach, N.R.,
A classification of unitary highest weight modules, Representation theory of
reductive groups (Park City, Utah, 1982),
{\sl Progr. Math. \bf 40} (1983), 97-143.

\bibitem[I]{I} Iwai, T., The geometry of the ${\rm SU}(2)$ Kepler problem.
{\sl J. Geom. Phys. \bf 7}  (1990) 507--535.

\bibitem[J]{J} H. P. Jackobsen, Hermitian symmetric spaces and their unitary highest weight modules.
{\sl J. Funct. Anal.  \bf 52}  (1983),  no. 3, 385--412.

\bibitem[JG]{JG} Jarvis, P.D.; Green, H.S., Casimir invariants and characteristic identities
for generators of the general linear, special linear and orthosymplectic graded Lie algebras.
{\sl J. Math. Phys.  \bf 20}  (1979), no. 10, 2115--2122.




\bibitem[K]{K} V. G. Kac, Lie superalgebras.
{\sl Advances in Math.  \bf 26 } (1977), no. 1, 8--96.

\bibitem[KLPW]{KLPW} A. Kirchberg, J. D. L\"{a}nge, P.A.G.
Pisani and A. Wipf, Algebraic solution of the supersymmetric
hydrogen atom in $d$ dimensions.  {\sl Ann. Physics  \bf 303}  (2003), 359--388.

\bibitem[M1]{M1} Guowu Meng, MICZ-Kepler problems in all dimensions. {\sl J. Math. Physics}, in press.

\bibitem[M2]{M2} Guowu Meng, to appear.

\bibitem[MZ]{MZ} Guowu Meng, R.B. Zhang, Generalised MICZ-Kepler problems and unitary highest weight modules.
arXiv:math-ph/0702086.

\bibitem[MC]{MC} McIntosh, Harold V.; Cisneros, Arturo,
 Degeneracy in the presence of a magnetic monopole.  {\sl J. Math. Physics  \bf 11}  (1970) 896--916.

\bibitem[S]{S} Scheunert, M. The theory of Lie superalgebras. An introduction.
Lecture Notes in Mathematics, {\bf 716}. Springer, Berlin, 1979.

\bibitem[Z]{Z} D. Zwanziger,  Exactly Soluble Nonrelativistic Model of Particles with Both Electric and Magnetic Charges.
{\sl Physical Rev. \bf 176} (1968) 1480-1488.
\end{thebibliography}
\end{document}